\newtheorem{theorem}{Theorem}[section]
\newtheorem{lemma}[theorem]{Lemma}
\newtheorem{proposition}[theorem]{Proposition}
\newtheorem{remark}[]{Remark}
\definecolor{CBLUE}{RGB}{0,112,192}
\definecolor{CRED}{RGB}{192,0,0}
\definecolor{CYELLOW}{RGB}{247,162,91}
\definecolor{CPURPLE}{RGB}{126,47,142}
\definecolor{CGREY}{RGB}{200,200,200}
\begin{document}

\title{Placing Grid-Forming Converters to Enhance Small Signal Stability of PLL-Integrated Power Systems}

\author{Chaoran Yang, Linbin Huang, Huanhai Xin, and Ping Ju
\thanks{Manuscript received July 8, 2020. The authors are with the College of Electrical Engineering at Zhejiang University, Hangzhou, China. (Emails: \text{yang\_chaoran@zju.edu.cn}, \text{huanglb@zju.edu.cn}, \text{xinhh@zju.edu.cn}, \text{pju@zju.eu.cn})}
\thanks{This work was supported by the National Natural Science Foundation of China (No. 51922094, No. 52007163).}}


\maketitle


\begin{abstract}

The modern power grid features the high penetration of power converters, which widely employ a phase-locked loop (PLL) for grid synchronization. However, it has been pointed out that PLL can give rise to small-signal instabilities under weak grid conditions. This problem can be potentially resolved by operating the converters in grid-forming mode, namely, without using a PLL. Nonetheless, it has not been theoretically revealed how the placement of grid-forming converters enhances the small-signal stability of power systems integrated with large-scale PLL-based converters. This paper aims at filling this gap. Based on matrix perturbation theory, we explicitly demonstrate that the placement of grid-forming converters is equivalent to increasing the power grid strength and thus improving the small-signal stability of PLL-based converters. Furthermore, we investigate the optimal locations to place grid-forming converters by increasing the smallest eigenvalue of the weighted and Kron-reduced Laplacian matrix of the power network. The analysis in this paper is validated through high-fidelity simulation studies on a modified two-area test system and a modified 39-bus test system. This paper potentially lays the foundation for understanding the interaction between PLL-based (i.e., grid-following) converters and grid-forming converters, and coordinating their placements in future converter-dominated power systems.

\end{abstract}

\begin{IEEEkeywords}
Generalized short-circuit ratio (gSCR), grid-forming converters, grid strength, phase-locked loop (PLL), small-signal stability.
\end{IEEEkeywords}

\section{Introduction}

Power converters are extensively integrated into modern power systems as the grid interfaces of renewables, HVDC systems, energy storage systems, and so on \cite{milano2018foundations,rocabert2012control,blaabjerg2006overview}. Currently, most of the converters adopt a phase-locked loop (PLL) for grid synchronization. The mechanism of PLL is easy to understand, as it can be considered as tracking the angle and frequency of the power grid with second-order dynamics \cite{golestan2015conventional}. For this reason, PLL-based converters are also widely known as ``grid-following'' converters \cite{poolla2019placement,pattabiraman2018comparison}. The large-scale integration of PLL-based converters gives rise to unprecedented changes to power systems as the synchronization mechanism of PLL-based converters is totally different from conventional power sources, i.e., synchronous generators (SGs).

Compared with conventional power systems, the synchronization dynamics of such {\em PLL-integrated power systems} are more complex and new types of instability issues may arise. For example, it has been reported that PLL-based converters could become unstable under weak grid conditions, which belongs to small-signal instability issues \cite{huang2019grid,wen2016analysis,wang2018unified,fan2018modeling}. This type of instability is dominated by the dynamics of PLL, while the oscillation frequency and stability margin are also pertinent to the design of other loops \cite{huang2019grid}. Such instability/oscillation should be prevented in practice because it endangers the secure operation of power systems and may cause economic loss once the converters get tripped.

Recent works have shown that grid-forming converters are naturally immune from the PLL-induced instabilities since a PLL is not needed to realize grid synchronization \cite{huang2017virtual,d2015virtual}. Grid-forming converters are supposed to have the capability of forming a local grid without connected to an extra voltage source. According to this definition, currently there exist many control strategies that can be classified as grid-forming, e.g., droop control, virtual synchronous machines (VSM), synchronverters, and virtual oscillator control \cite{zhong2011synchronverters,gross2019effect,d2013virtual}. Without loss of generality, in this paper we consider VSM as a prototypical type of grid-forming control, as it also covers another popular type, i.e., droop control (which can be considered as VSM with zero virtual inertia).

The initial motivations of using grid-forming control were to realize islanded operation, inertia emulation, voltage support, etc., while it turns out that another significant advantage is the robustness against various power grid strength, namely, it fits well with weak grid conditions \cite{huang2019adaptive,rocabert2012control,arghir2018grid,matevosyan2019grid}. Hence, grid-forming control can be considered as a promising technique to accommodate large-scale power converters. Moreover, grid-forming converters also achieve better performance than grid-following converters in terms of virtual inertia provision \cite{poolla2019placement}.

However, currently almost all the installed power converters in practice have been equipped with PLL-based controllers, and thus it could be unrealistic to change all of them into grid-forming converters. As an alternative, one can change some of the installed converters into grid-forming type, or require that the converters to be installed should employ grid-forming control. That is to say, future power systems will comprise both PLL-based converters and grid-forming converters.

There have been research works using a single-converter-infinite-bus system to demonstrate that grid-forming converters can maintain desired stability margin even under very weak grid conditions \cite{zhang2009power,huang2019adaptive}. By comparison, PLL-based converters may become unstable which feature sustained oscillations of the frequency output (i.e., PLL-induced instabilities) \cite{huang2019grid,wen2016analysis}. It has also attracted an increasing research interest to study the interaction between grid-forming converters and PLL-based converters \cite{milano2018foundations,poolla2019placement}. For example, Ref.~\cite{poolla2019placement} investigated how the virtual inertia implementations in PLL-based converters and grid-forming converters affect the performance of the system's frequency responses, and an optimal placement algorithm was proposed to improve the frequency responses.

However, in terms of the (PLL-induced) small-signal stability problem, it has not been studied yet how grid-forming converters and PLL-based converters interact with each other and affect such stability of multi-converter systems. In other words, it remains unclear whether the placements of grid-forming converters can help improve the small-signal stability of power systems that integrated with large-scale PLL-based converters and reduce the chance of PLL-induced oscillations. Moreover, it is also unclear how to optimally place the grid-forming converters with regards to the stability margin.

This paper aims at filling these gaps by theoretically exploring the interactions between grid-forming converters and PLL-based converters and analyzing the resulting (PLL-induced) small-signal stability.

In a first step, we model the system which describes how the grid-forming converters interact with PLL-based converters via the power network. Then, by using matrix perturbation theory, we explicitly analyze how the integration of grid-forming converters affects the small-signal stability of PLL-based converters. Our analysis is based on our previous finding that the small-signal stability of PLL-based converters is determined by the power grid strength which can be characterized by the generalized short-circuit ratio (gSCR) \cite{dong2018small,huang2019impacts}. This enables us to study the impacts of grid-forming converters on the stability of PLL-based converters by simply focusing on how the grid-forming converters equivalently change the grid strength. We will explicitly show that the integration of grid-forming converters is equivalent to placing voltage sources in the power network and thus enhance the grid strength, which is attributed to the voltage regulation inside grid-forming controls. Moreover, based on the analysis, we investigate how to optimally place the grid-forming converters to enhance the overall system stability, which is based on increasing the smallest eigenvalue of the weighted and Kron-reduced Laplacian matrix of the power network (i.e., gSCR of the system).

The three substantial contributions of this paper are summarized as follows:

\begin{enumerate}
  \item By using matrix perturbation theory, it is theoretically shown that the placement of grid-forming converters is equivalent to changing the grid strength (characterized by gSCR) from the perspective of the PLL-based converters.
  \item It is rigorously proven that the placement of grid-forming converters has positive effects on the (PLL-induced) small-signal stability of multi-converter systems, and the locations of the grid-forming converters determine to what extent the stability can be improved.
  \item Based on the theoretical analysis of how grid-forming converters affect PLL-based converters, we formulate a tractable optimization problem in order to find the optimal locations to place grid-forming converters and enhance the (PLL-induced) small-signal stability of the system.
\end{enumerate}

The rest of this paper is organized as follows: Section II presents the system modeling considering grid-forming and PLL-based converters. Section III analyzes the impacts of grid-forming control on the stability of PLL-based converters by focusing on the grid strength. Section IV investigates the optimal placement of grid-forming converters. Simulation results are given in Section V. Section VI concludes the paper.

\section{Multi-Converter System Modeling}

In this section, we will briefly introduce the admittance models of grid-forming converters and PLL-based converters, and then develop the closed-loop model of a multi-converter system in order to illustrate how grid-forming converters interact with PLL-based converters via the network. As mentioned before, we consider VSM as a prototypical grid-forming control without loss of generality.

Fig.~\ref{Fig_converter_control} shows a three-phase converter which is connected to the ac grid via an {\em LCL} filter. The converter can be operated in PLL-based mode or grid-forming mode, with the control diagrams given in Fig.~\ref{Fig_converter_control}.

\begin{figure}[!t]
	\centering
	\includegraphics[width=3.4in]{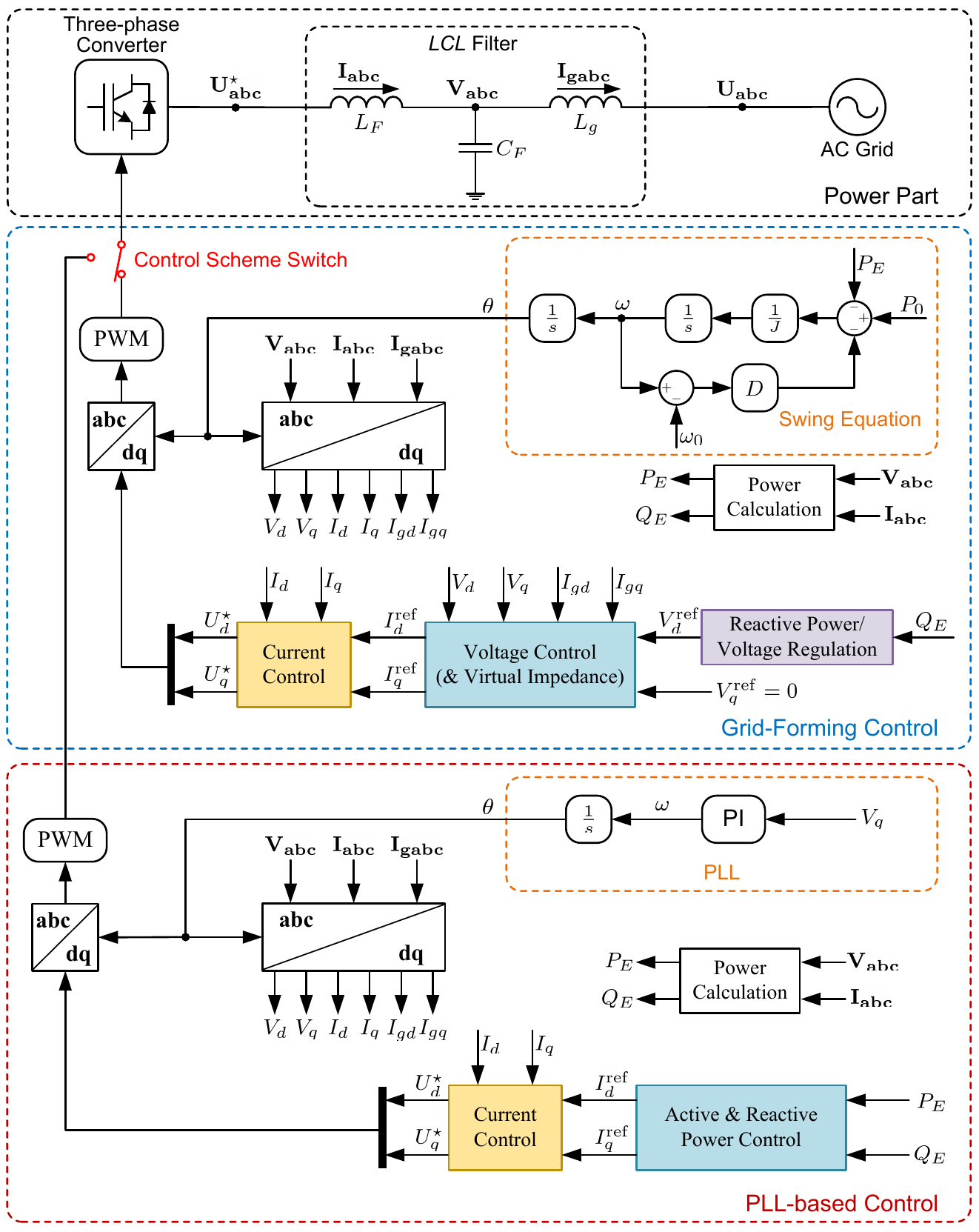}
	\vspace{-2mm}
	\caption{One-line diagram of a grid-connected converter with grid-forming control or PLL-based control.}
	\vspace{-2mm}
	\label{Fig_converter_control}
\end{figure}

\subsection{Admittance Modeling of PLL-based Converters}

As labelled in Fig.~\ref{Fig_converter_control}, $\bf V_{abc}$ is the three-phase capacitor voltage, $\bf I_{abc}$ is the converter-side three-phase current, $\bf I_{gabc}$ is the three-phase current injected into the ac grid, $\bf U_{abc}^\star$ is the converter's voltage determined by the modulation, and $\bf U_{abc}$ is the ac grid voltage. Let $\vec V = V_d+jV_q$, $\vec I = I_d+jI_q$, $\vec I_g = I_{gd}+jI_{gq}$, $\vec U^\star = U^\star_d+jU^\star_q$, and $\vec U = U_d+jU_q$ be respectively the space vectors of $\bf V_{abc}$, $\bf I_{abc}$, $\bf I_{gabc}$, $\bf U_{abc}^\star$, and $\bf U_{abc}$ in the {\em controller's} {\em dq}-frame.

By using complex transfer functions, the dynamic equations of the {\em LCL} can be formulated as \cite{harnefors2007modeling}
\begin{equation}\label{eq:Lf}
{\vec U^\star} - \vec V = \left( {s{L_F} + j\omega {L_F}} \right)  \vec I\,,		
\end{equation}
\begin{equation}\label{eq:Cf}
\vec I - {\vec I_g} = \left( {s{C_F} + j\omega {C_F}} \right)  \vec V  \,, 	
\end{equation}
\begin{equation}\label{eq:Lg}
\vec V - \vec U = \left( {s{L_g} + j\omega {L_g}} \right)  {\vec I_g} \,,
\end{equation}
where $L_F$ is the converter-side inductance, $L_g$ is the grid-side inductance, and $C_F$ is the capacitance of the {\em LCL} filter.

The dynamic equation of the current loop is
\begin{equation}\label{eq:current_control}
{\vec U^\star} = {\rm PI}_{\rm CC}(s)  \left( {{{\vec I}^{\rm ref}} - \vec I} \right ) + j\omega {L_F}\vec I + {f_{\rm VF}}(s)\vec V\,,	
\end{equation}
where ${\rm PI}_{\rm CC}(s)$ is the transfer function of the PI regulator, ${\vec I}^{\rm ref} = {\vec I}^{\rm ref}_d+j{\vec I}^{\rm ref}_q$ is the current reference vector which comes from the power control loops, $f_{\rm VF}(s) = K_{\rm VF}/(T_{\rm VF}s+1)$ is a first-order filter which eliminates the high-frequency components of the voltage feed-forward signals.

The power control loops can be formulated as
\begin{equation}\label{eq:power_control}
\begin{split}
I_d^{\rm ref} &= {\rm PI}_{\rm PC}(s)  \left( {{P^{\rm ref}} - {P_E}} \right)\,,\\
I_q^{\rm ref} &= {\rm PI}_{\rm QC}(s)  \left( {{Q_E} - {Q^{\rm ref}}} \right)\,,		
\end{split}
\end{equation}
where ${\rm PI}_{\rm PC}(s)$ and ${\rm PI}_{\rm QC}(s)$ are the transfer functions of the PI regulators, ${P^{\rm ref}}$ and ${Q^{\rm ref}}$ are the power reference values, $P_E$ and $Q_E$ are the active and reactive powers calculated by
\begin{equation}\label{eq:PQ}
\begin{split}
{P_E} &= {V_d}{I_{gd}} + {V_q}{I_{gq}}\,,\\
{Q_E} &= {V_q}{I_{gd}} - {V_d}{I_{gq}}.		
\end{split}
\end{equation}

The converter is synchronized with the grid via the PLL, which determines the angle of the controller's {\em dq}-frame as
\begin{equation}\label{eq:PLL}
\theta  = \frac{\omega }{s} = \frac{1}{s}  {\rm PI}_{\rm PLL}(s)  {V_q} \,,		
\end{equation}
where ${\rm PI}_{\rm PLL}(s)$ is the transfer function of the PI regulator, $\theta$ and $\omega$ are respectively the angle and angular frequency of the controller's {\em dq}-frame.

The voltage and current vectors can be transformed into the global {\em dq}-frame (whose angular frequency is $\omega_g = \omega_0 = 100\pi {\rm rad/s}$ and the angle is $\theta_g$) as
\begin{equation}\label{eq:global_dq}
\begin{split}
& \vec I'_g = I'_{gd}+jI'_{gq} = \vec I_g e^{j\delta} \,, \\
& \vec U' = U'_{d}+jU'_{q} = \vec U e^{j\delta} \,,
\end{split}
\end{equation}
where $\delta = \theta - \theta_g$, $\vec I'_g$ and $\vec U'$ are respectively the grid-side current and the grid voltage in the global {\em dq}-frame.

We note that the above equations based on space vectors and complex transfer functions can be transformed into their matrix forms by considering \cite{harnefors2007modeling}
\begin{equation}\label{eq:transf}
\begin{split}
{y_d} + j{y_q} &= \left[ {{G_d}(s) + j{G_q}(s)} \right]  \left( {{x_d} + j{x_q}} \right)\\
\Leftrightarrow \left[ {\begin{array}{*{20}{c}}
	{{y_d}}\\
	{{y_q}}
	\end{array}} \right] &= \left[ {\begin{array}{*{20}{c}}
	{{G_d}(s)}&{ - {G_q}(s)}\\
	{{G_q}(s)}&{{G_d}(s)}
	\end{array}} \right]\left[ {\begin{array}{*{20}{c}}
	{{x_d}}\\
	{{x_q}}
	\end{array}} \right].	
\end{split}
\end{equation}
Then, by linearizing \eqref{eq:Lf}-\eqref{eq:global_dq} and combining them, we obtain the admittance model of PLL-based converters denoted by
\begin{equation}\label{eq:Impedance_PLL}
- \left[ {\begin{array}{*{20}{c}}	{\Delta I'_{gd}}\\	{\Delta I'_{gq}} \end{array}} \right] = {\bf Y_{PLL}}(s) \left[ {\begin{array}{*{20}{c}}	{\Delta U'_{d}}\\	{\Delta U'_{q}} \end{array}} \right] \,,
\end{equation}
where $\Delta$ denotes the perturbed value of a variable, ${\bf Y_{PLL}}(s)$ is the $2 \times 2$ admittance matrix. For the detailed derivation and expression of ${\bf Y_{PLL}}(s)$ we refer to \cite{wen2016analysis, huang2019impacts}, etc.

\subsection{Admittance Modeling of Grid-Forming Converters}

Different from PLL-based converters, the current reference vector $\vec I^{\rm ref}$ of in the grid-forming controller in Fig.~\ref{Fig_converter_control} comes from the voltage loop as
\begin{equation}\label{eq:voltage_control}
\vec I^{\rm ref} = {\rm PI_{VC}}(s)(\vec V^{\rm ref} - \vec V) + j \omega C_F \vec V + k_F \vec I_g \,,
\end{equation}
where ${\rm PI_{VC}}(s)$ is the transfer function of the PI regulator, $\vec V^{\rm ref} = 1+j0$ is the voltage reference vector, and $k_F$ is the current feed-forward coefficient. Note that $\vec V^{\rm ref}$ can also be provided by a reactive power control loop if needed.

Moreover, the converter in Fig.~\ref{Fig_converter_control} achieves grid synchronization by emulating the swing equation as
\begin{equation}\label{eq:swing}
\left\{ \begin{array}{l}
s \theta = \omega \,, \\
Js\omega = P_0 - P_E - D(\omega - \omega_0) \,,
\end{array} \right.
\end{equation}
which determines the angle and frequency of the controller's {\em dq}-frame.

By combining the linearized form of \eqref{eq:Lf}-\eqref{eq:current_control}, \eqref{eq:PQ}, \eqref{eq:global_dq}, \eqref{eq:voltage_control}, and \eqref{eq:swing}, we derive the admittance model of grid-forming converters (VSMs in this paper) as
\begin{equation}\label{eq:Impedance_VSM}
\begin{array}{l}
- \left[ {\begin{array}{*{20}{c}}	{\Delta I'_{gd}}\\	{\Delta I'_{gq}} \end{array}} \right] = {\bf Y_{GF}}(s) \left[ {\begin{array}{*{20}{c}}	{\Delta U'_{d}}\\	{\Delta U'_{q}} \end{array}} \right] \,,\\
{\bf Y_{GF}}(s) = - \left[ {\begin{array}{*{20}{c}} Y(s) & 0 \\ \frac{Y^2(s)V^2_{d0} - (I'_{gd0})^2}{Js^2+Ds} & Y(s) \end{array}} \right] \,,
\end{array}
\end{equation}
where $\bf Y_{GF}(s)$ is the $2 \times 2$ admittance matrix, the subscript $_0$ denotes the steady-state value of a variable, and
\begin{equation}\label{eq:Ys}
\begin{array}{*{20}{l}}
Y(s) = {\frac{G_{\rm VF}(s)+G_{\rm CC}(s){\rm PI_{VC}}(s)+sC_F+j\omega C_F[1 - G_{\rm CC}(s)]}{k_F G_{\rm CC}(s)-1}} \,, \\
G_{\rm CC}(s) = \frac{{\rm PI_{CC}}(s)}{sL_F+{\rm PI_{CC}}(s)} \,, \\
G_{\rm VF}(s)=\frac{1-f_{\rm VF}(s)}{sL_F+{\rm PI_{CC}}(s)} \,.
\end{array}
\end{equation}
We note that generally $Y(s)$ has large magnitudes due to the voltage control, or in other words, the voltage control is supposed to have limited output impedance with appropriate control designs \cite{tao2015analysis}.

\subsection{Closed-Loop Dynamics of Multi-Converter Systems}

\begin{figure}[!t]
	\centering
	\includegraphics[width=3.4in]{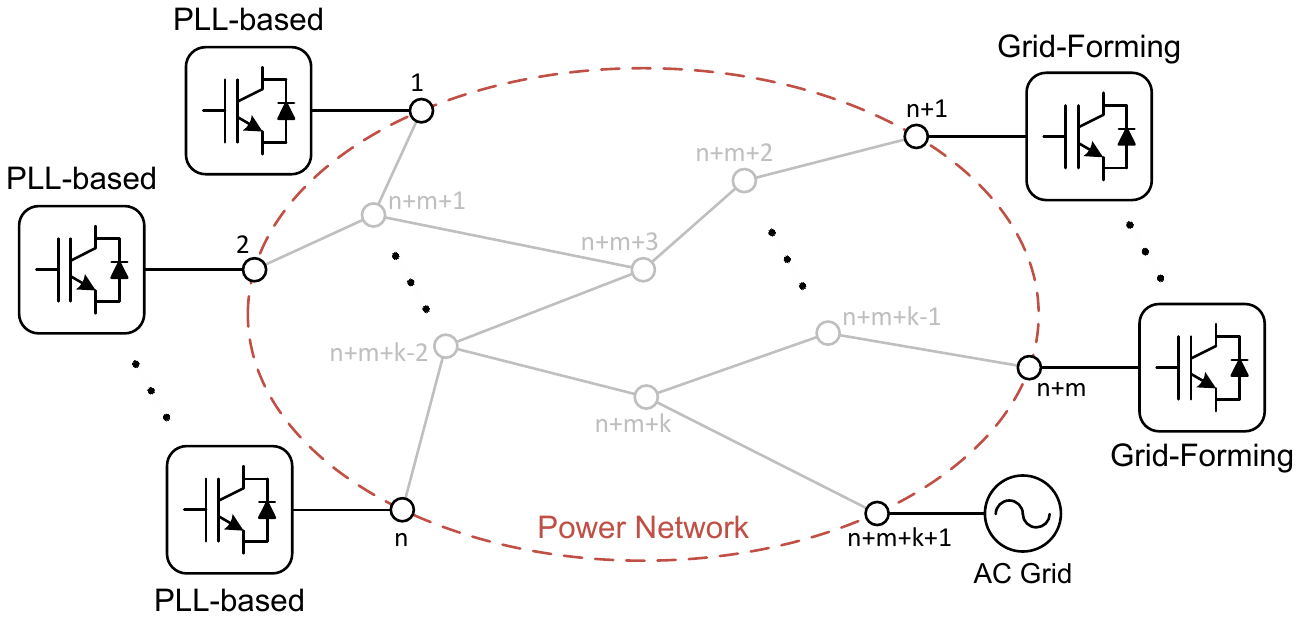}
	\vspace{-2mm}
	\caption{Illustration of a multi-converter system.}
	\vspace{-2mm}
	\label{Fig_network}
\end{figure}

Now we are ready to formulate the closed-loop dynamics of multi-converter systems. Consider a multi-converter system as depicted in Fig.~\ref{Fig_network} (the network topology is for illstration), which contains $n$ PLL-based converters (connected Nodes $1 \sim n$), $m$ grid-forming converters (connected to Nodes $n+1 \sim n+m$), $k$ interior nodes (Nodes $n+m+1 \sim n+m+k$), and an infinite bus (Node $n+m+k+1$). The interior nodes are not directly connected to the converters and will be eliminated through Kron reduction by assuming that the currents injected into these nodes remain constant \cite{dorfler2013kron}. The infinite bus can be considered an ``grounded'' node in small-signal modeling \cite{huang2019impacts}.

For a transmission line that connects Node $i$ and Node $j$, its dynamic equation can be expressed as
\begin{equation}\label{eq:nodeij}
\begin{split}
&\left[ {\begin{array}{*{20}{c}}
	{{\Delta I'_{d,ij}}}\\
	{{\Delta I'_{q,ij}}}
	\end{array}} \right]= {B_{ij}}F(s)\left[ {\begin{array}{*{20}{c}}
	{{\Delta U'_{d,i}} - {\Delta U'_{d,j}}}\\
	{{\Delta U'_{q,i}} - {\Delta U'_{q,j}}}
	\end{array}} \right]\,,\\
&F(s)= \frac{{{1}}}{{{(s+\tau)^2/\omega _0} + \omega _0}}\left[ {\begin{array}{*{20}{c}}
	{s+\tau}&{ {\omega _0}}\\
	{ - {\omega _0}}&{s+\tau}
	\end{array}} \right]\,,		
\end{split}
\end{equation}
where $\left[ {\begin{array}{*{20}{c}} 	{{\Delta I'_{d,ij}}}\\	{{\Delta I'_{q,ij}}} \end{array}} \right]$ is the current from $i$ to $j$ and $\left[ {\begin{array}{*{20}{c}} {{\Delta U'_{d,i}}}\\	{{\Delta U'_{q,i}}} \end{array}} \right]$ is the voltage at $i$ (in the global {\em dq}-frame), $B_{ij} = 1/(L_{ij}  \omega _0)$ is the susceptance between $i$ and $j$, and $\tau$ is the identical $R/L$ ratio of all the lines.

Let $Q \in \mathbb{R}^{(n+m+k)\times (n+m+k)}$ be the grounded Laplacian matrix of the electrical network which can be calculated by ${Q_{ij}} =  - {B_{ij}} (i \ne j)$ and ${Q_{ii}} = \sum\limits_{j = 1,i \ne j}^{n+m+k} {{B_{ij}}} + B_{i,n+m+k+1}$. By performing Kron reduction, we eliminate the interior nodes and obtained the Kron-reduced Laplacian matrix as
\begin{equation}\label{eq:Qred}
{Q_{{\rm{red}}}} = {Q_1} - {Q_2}  Q_4^{ - 1}  {Q_3}\,,		
\end{equation}
where ${Q_1} \in {\mathbb{R}^{(m+n) \times (m+n)}}$, ${Q_2} \in {\mathbb{R}^{(m+n) \times k}}$, ${Q_3} \in {\mathbb{R}^{k \times (m+n)}}$, ${Q_4} \in {\mathbb{R}^{k \times k}}$,
$Q = \left[ {\begin{array}{*{20}{c}}
	{{Q_1}}&\vline& {{Q_2}}\\
	\hline
	{{Q_3}}&\vline& {{Q_4}}
	\end{array}} \right]$.

Then, similar to \cite{huang2019impacts,huang2020h,dong2018small}, it can be deduced from \eqref{eq:nodeij} and \eqref{eq:Qred} that the {\em network dynamics} can be formulated as
\begin{equation}\label{eq:network_dynamics}
\Delta {\bf I'_g} = Q_{\rm red} \otimes F(s) \Delta {\bf U'_g} \,,
\end{equation}
where $\Delta {\bf I'_g} \in \mathbb{R}^{2m+2n}$ is the stacked current vector of the $n+m$ converters (in the global {\em dq}-frame) injected into the network, $\Delta {\bf U'_g} \in \mathbb{R}^{2m+2n}$ is the stacked voltage vector of the $n+m$ converters, and $\otimes$ denotes the Kronecker product.

On the other hand, considering \eqref{eq:Impedance_PLL} and \eqref{eq:Impedance_VSM}, the dynamics of the $n$ PLL-based converters and the $m$ grid-forming converters can be formulated as
\begin{equation}\label{eq:converter_dynamics}
-\Delta {\bf I'_g} = {\bf S} \otimes I_2 \left[ {\begin{array}{*{20}{c}}
	{{I_n \otimes {\bf Y_{PLL}}(s)}}& {{\bf 0}}\\
	{{\bf 0}}& {{I_m \otimes {\bf Y_{GF}}(s)}}
	\end{array}} \right] \Delta {\bf U'_g} \,,
\end{equation}
where $I_n \in \mathbb{R}^{n \times n}$ denotes the identity matrix, $\bf 0$ denotes the zero matrix with a proper dimension, ${\bf S} \in \mathbb{R}^{(n+m) \times (n+m)}$ is a diagonal matrix whose $i{\rm th}$ diagonal element ${\bf S}_{i}$ is the capacity ratio of the $i{\rm th}$ node's rated capacity to the base capacity of per-unit calculation (we use the same base values when performing per-unit calculations for all the converters). In the above formulation, for simplicity we ignore the power angle differences of the converters (which are generally small enough \cite{dong2018small}) such that the admittance matrices are the same when applying the same control scheme.

\begin{figure}[!t]
	\centering
	\includegraphics[width=2.8in]{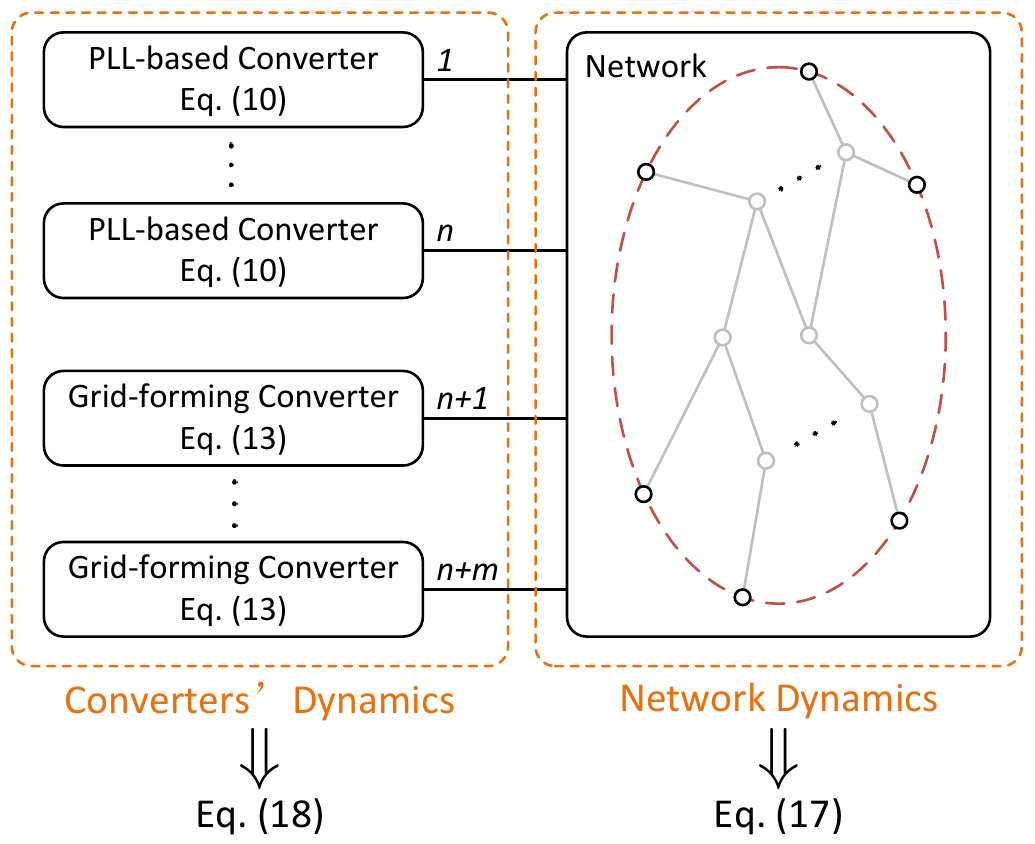}
	\vspace{-3mm}
	\caption{Modeling of the converters' dynamics and the network dynamics.}
	\vspace{0mm}
	\label{Fig_interaction}
\end{figure}

We remark that the above modeling is based on dividing the system into two parts, namely, the power network and the combination of all the converters, and then deriving the transfer function matrices of these two parts, as illustrated in Fig.~\ref{Fig_interaction}. The converters' dynamics are reflected in \eqref{eq:converter_dynamics}, which has a block diagonal structure and each block is either \eqref{eq:Impedance_PLL} or \eqref{eq:Impedance_VSM} as it represents the converter's dynamics. The network dynamics are reflected in \eqref{eq:network_dynamics}, which is derived based on the Kron-reduced Laplacian matrix of the power network, similar to the derivations in \cite{huang2019impacts} and \cite{huang2020h}.

\begin{figure}[!t]
	\centering
	\includegraphics[width=3.3in]{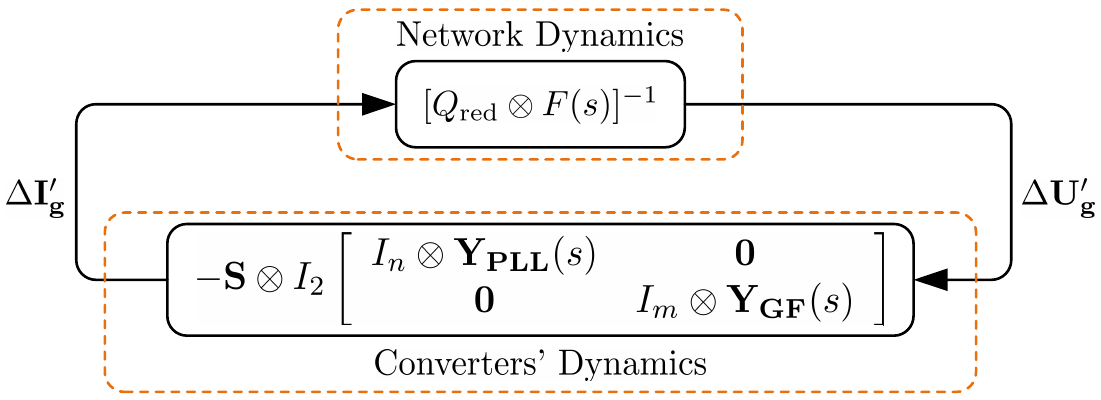}
	\vspace{-3mm}
	\caption{Closed-loop dynamics of multi-converter systems.}
	\vspace{-3mm}
	\label{Fig_closed_loop}
\end{figure}

By combining \eqref{eq:network_dynamics} and \eqref{eq:converter_dynamics}, we obtain the closed-loop dynamics of the multi-converter system as shown in Fig.~\ref{Fig_closed_loop}. We note that the block diagonal structure of the converters' dynamics facilitates the analysis of the impacts of grid-forming converters, which will be elaborated upon in the next section.

\section{Impacts of Grid-Forming Converters}
\label{sec:3}

Based on the above system modeling, in this section we focus on analyzing the impacts of grid-forming converters on the small-signal stability of the system integrated with PLL-based converters. For simplicity in the following analysis, we assume $m=1$, i.e., only one grid-forming converter is placed. We will show that the case of multiple grid-forming converters can be analyzed by repeating our analysis on $m=1$.

It can be deduced from Fig.~\ref{Fig_closed_loop} that the characteristic equation (with $m=1$) of the system is
\begin{equation}\label{eq:det_closed_loop}
\begin{split}
0&= {\rm det}\left( \left[ {\begin{array}{*{20}{c}}
	{{{\bf S_B} \otimes {\bf Y_{PLL}}(s)}}& {{\bf 0}}\\
	{{\bf 0}}& {{ {\bf Y_{GF}}(s)}}
	\end{array}} \right] + Q_{\rm red} \otimes F(s) \right) \\
&= {\rm det}\left[ {\bf C_{GF}}(s) \right] {\rm det}\{ {{\bf S_B} \otimes [{\bf Y_{PLL}}(s)F^{-1}(s)]} + Q_n \otimes I_2 \\
&\;\;\;\; - Q_{n,1} \otimes F(s) \; {\bf C}^{-1}_{\bf GF}(s) \; Q_{1,n} \otimes I_2 \} \; {\rm det}\left[ I_n \otimes F(s) \right] \,,
\end{split}
\end{equation}
where ${\rm det}(\cdot)$ denotes the determinant, ${\bf S_B}$ is the capacity ratio matrix for PLL converters (without loss of generality, we assume the capacity of the grid-forming converter equals the base capacity such that its capacity ratio is $1$, i.e., ${\bf S} = {\rm diag}\{{\bf S}_{\bf B},1\}$), ${\bf C_{GF}}(s) = {\bf Y_{GF}}(s) + Q_{n+1}F(s)$, $Q_n \in \mathbb{R}^{n\times n}$, $Q_{1,n} \in \mathbb{R}^{1\times n}$, $Q_{n,1} \in \mathbb{R}^{n\times 1}$, and $Q_{n+1} \in \mathbb{R}$ satisfy
\begin{equation}\label{eq:Q_n}
Q_{\rm red} = \left[ {\begin{array}{*{20}{c}}
	{Q_n}&\vline& {Q_{n,1}}\\
\hline
	{Q_{1,n}}&\vline& {Q_{n+1}}
	\end{array}} \right] \,.
\end{equation}

Notice that ${\rm det}[{\bf C_{GF}}(s)]$ is in fact the closed-loop characteristic equation of a grid-forming converter system connected to the infinite bus via a susceptance $Q_{n+1}$, which should be designed as stable. Moreover, as mentioned before, generally ${\bf Y_{GF}}(s)$ has large magnitudes due to the grid-forming design, so ${\bf C_{GF}}(s) \approx {\bf Y_{GF}}(s)$. Hence, it can be deduced from \eqref{eq:det_closed_loop} that we can instead focus on the following characteristic equation for evaluating the system stability
\begin{equation}\label{eq:det_closed_loop1}
\begin{split}
0={\rm det}\{ {I_n \otimes [{\bf Y_{PLL}}(s)F^{-1}(s)]} + ({\bf S}_{\bf B}^{-1} Q_n) \otimes I_2 \\
 - ({\bf S}_{\bf B}^{-1} Q_{n,1}Q_{1,n}) \otimes [F(s)  {\bf Y}^{-1}_{\bf GF}(s) ]  \} \,.
\end{split}
\end{equation}

According to the matrix perturbation theory \cite{stewart1990matrix}, \eqref{eq:det_closed_loop1} can be reformulated as
\begin{equation}\label{eq:det_perturb}
\begin{split}
0 =\;& [x \otimes a(s)]^\top \{I_n \otimes [{\bf Y_{PLL}}(s)F^{-1}(s)] + ({\bf S}_{\bf B}^{-1} Q_n) \otimes I_2 \\
&- ({\bf S}_{\bf B}^{-1}Q_{n,1}Q_{1,n}) \otimes [F(s)  {\bf Y}^{-1}_{\bf GF}(s) ] \} [y \otimes b(s)] \\
&+ o(\|F(s)  {\bf Y}^{-1}_{\bf GF}(s)\|^2) \,,
\end{split}
\end{equation}
where $x$ and $y$ are the left and right eigenvectors corresponding to the smallest eigenvalue of ${\bf S}_{\bf B}^{-1}Q_n$ (denoted by $\lambda'_1$) with normalization $x^\top y=1$, $a(s)$ and $b(s)$ are the normalized left and right eigenvectors of ${\bf Y_{PLL}}(s)F^{-1}(s)$ which satisfy $a^\top(s) [{\bf Y_{PLL}}(s)F^{-1}(s)] b(s) = \gamma(s)$ pertinent to the dominant poles of the system, $o(\|F(s)  {\bf Y}^{-1}_{\bf GF}(s)\|^2)$ is the second-order approximation error \cite[Theorem 2.3]{stewart1990matrix}. By ignoring this approximation error it can be further derived that
\begin{equation}\label{eq:det_perturb1}
\begin{split}
0 \approx \;& [x \otimes a(s)]^\top \{I_n \otimes [{\bf Y_{PLL}}(s)F^{-1}(s)] + ({\bf S}_{\bf B}^{-1}Q_n) \otimes I_2 \\
&- ({\bf S}_{\bf B}^{-1}Q_{n,1}Q_{1,n}) \otimes [F(s)  {\bf Y}^{-1}_{\bf GF}(s) ] \} [y \otimes b(s)] \\
= \; & [a^\top(s){\bf Y_{PLL}}(s)F^{-1}(s)b(s)] + (x^\top {\bf S}_{\bf B}^{-1}Q_n y)\\
&- (x^\top {\bf S}_{\bf B}^{-1} Q_{n,1}Q_{1,n} y)[a^\top(s)F(s)  {\bf Y}^{-1}_{\bf GF}(s)b(s)] \\
= \; & \gamma(s) + \lambda'_1 + \Delta \lambda(s) \,,
\end{split}
\end{equation}
where
\begin{equation}\label{eq:Delta_lambda}
\Delta \lambda(s) = - (x^\top {\bf S}_{\bf B}^{-1} Q_{n,1}Q_{1,n} y)[a^\top(s)F(s)  {\bf Y}^{-1}_{\bf GF}(s)b(s)] \,.
\end{equation}

Eq.~\eqref{eq:det_perturb1} describes how the grid-forming converter interacts with PLL-based converters via the power network and thus affects the dominant poles of the whole system. In the following, we introduce two propositions in order to provide a intelligible interpretation on \eqref{eq:det_perturb1}.

\begin{proposition}[Single Converter System]
\label{prop:Single_Converter_system}
The dominant poles of a PLL-based converter that is connected to an infinite bus (with line susceptance being $\lambda_1$) are determined by
\begin{equation}\label{eq:PLL_det}
0 = \gamma(s) + \lambda_1 \,,
\end{equation}
where the definition of $\gamma(s)$ has been given above.
\end{proposition}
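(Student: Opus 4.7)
The plan is to specialize the general closed-loop dynamics already derived in Section II to the case $n=1$, $m=0$, and read off the characteristic equation directly. For a single PLL-based converter connected to an infinite bus through a line of susceptance $\lambda_1$, the grounded Laplacian has only one non-interior node and hence $Q_{\mathrm{red}} = \lambda_1 \in \mathbb{R}$, the capacity ratio matrix reduces to $\mathbf{S}_{\mathbf{B}} = 1$, and the grid-forming block in \eqref{eq:det_closed_loop} disappears. So the closed-loop characteristic equation collapses to
\begin{equation*}
0 = \det\!\bigl[\mathbf{Y_{PLL}}(s) + \lambda_1\, F(s)\bigr],
\end{equation*}
which, by factoring $F(s)$ on the right and using $\det[I_n\otimes F(s)]\neq 0$ as in the derivation of \eqref{eq:det_closed_loop1}, is equivalent to
\begin{equation*}
0 = \det\!\bigl[\mathbf{Y_{PLL}}(s)F^{-1}(s) + \lambda_1 I_2\bigr].
\end{equation*}

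Next I would invoke the eigenstructure of the $2\times 2$ matrix $\mathbf{Y_{PLL}}(s)F^{-1}(s)$. Writing it in the basis of its right/left eigenvectors $b(s),a(s)$ (normalised so that $a^\top(s)b(s)=1$), we have $a^\top(s)\bigl[\mathbf{Y_{PLL}}(s)F^{-1}(s)\bigr]b(s)=\gamma(s)$, i.e.\ $\gamma(s)$ is the eigenvalue associated with the slow/dominant mode, while the second eigenvalue corresponds to well-damped fast dynamics of the inner current loop and voltage feed-forward. The determinant condition then factors as
\begin{equation*}
0 = \bigl[\gamma(s)+\lambda_1\bigr]\cdot\bigl[\gamma_2(s)+\lambda_1\bigr],
\end{equation*}
where $\gamma_2(s)$ denotes the non-dominant eigenvalue. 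Since the roots of the second factor lie far in the left-half plane (they represent the inner-loop dynamics, which are by design fast and stable for any reasonable $\lambda_1$), the dominant poles of the single-converter system are exactly the solutions of $0=\gamma(s)+\lambda_1$, which is the claimed identity \eqref{eq:PLL_det}.

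The main obstacle is the last step: rigorously singling out $\gamma(s)$ as the eigenvalue producing the dominant poles, as opposed to $\gamma_2(s)$. I would justify this by recalling the standard time-scale separation used throughout the PLL literature cited in the excerpt (\cite{wen2016analysis,huang2019impacts}): the PLL loop operates at a bandwidth an order of magnitude below the current and voltage feed-forward loops, so one of the two eigenvalues of $\mathbf{Y_{PLL}}(s)F^{-1}(s)$ varies slowly with $s$ near the imaginary axis and its corresponding root locus $\gamma(s)+\lambda_1=0$ dictates the small-signal stability margin, while the other is confined to the high-frequency, well-damped region. Identifying this slow eigenvalue as $\gamma(s)$ is precisely the convention established in the paragraph preceding \eqref{eq:Delta_lambda}, making the proposition a direct consequence of the characteristic-equation factorisation above.
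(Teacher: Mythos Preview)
Your proposal is correct and follows essentially the same route as the paper: specialize to $n=1$, $m=0$, obtain $\det[\mathbf{Y_{PLL}}(s)F^{-1}(s)+\lambda_1 I_2]=0$, and then isolate the factor $\gamma(s)+\lambda_1$ associated with the dominant eigenvalue of $\mathbf{Y_{PLL}}(s)F^{-1}(s)$. The only cosmetic difference is that the paper sidesteps your explicit time-scale-separation argument by simply invoking the \emph{definition} of $a(s),b(s),\gamma(s)$ as the eigenpair ``pertinent to the dominant poles,'' so that left- and right-multiplying by $a^\top(s)$ and $b(s)$ immediately yields $0=\gamma(s)+\lambda_1$.
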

\begin{proof}
It can be deduced from \eqref{eq:network_dynamics} and \eqref{eq:converter_dynamics} that the characteristic equation of such a system ($n=1,m=0$) is
\begin{equation}\label{PLL_det1}
\begin{split}
0 = \; &{\rm det}[{\bf Y_{PLL}}(s)+\lambda_1F(s)] \\
= \; &  {\rm det}[{\bf Y_{PLL}}(s)F^{-1}(s) + \lambda_1 I_2]{\rm det}[F(s)] \,.
\end{split}
\end{equation}

According to the definitions of $a(s)$ and $b(s)$ as given above, the dominant poles can be obtained by solving
\begin{equation}\label{PLL_det2}
0 = a^\top(s)[{\bf Y_{PLL}}(s)F^{-1}(s) + \lambda_1 I_2]b(s)
= \gamma(s) + \lambda_1 \,,
\end{equation}
which concludes the proof.
\end{proof}

The following proposition will show that \eqref{eq:PLL_det} also determines the dominant poles of multi-PLL-based-converter systems under certain circumstances.

\begin{proposition}[Multi-Converter system \cite{dong2018small}]
\label{prop:Multi_Converter_system}
Consider a power network (with the Kron-reduced Laplacian matrix $Q_{\rm red} \in \mathbb{R}^{p \times p}$) which interconnects $p$ ($p \in \mathbb{Z}_+$) PLL-based converters (with the capacity ratio matrix being ${\bf S}$). The dominant poles of this system can be obtained by solving \eqref{eq:PLL_det} (wherein $\lambda_1$ is the smallest eigenvalue of ${\bf S}^{-1}Q_{\rm red}$, also defined as the generalized short-circuit ratio (gSCR) in \cite{dong2018small} to evaluate the power grid strength).
\end{proposition}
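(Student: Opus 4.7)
The plan is to reduce the multi-converter characteristic equation to a product of scalar single-converter equations by diagonalizing the weighted network matrix. First, specializing \eqref{eq:network_dynamics} and \eqref{eq:converter_dynamics} to the case of $p$ PLL-based converters only, the system characteristic equation becomes
\begin{equation*}
0 = \det\bigl({\bf S}\otimes {\bf Y_{PLL}}(s) + Q_{\rm red}\otimes F(s)\bigr).
\end{equation*}
Since ${\bf S}$ is diagonal and positive and $F(s)$ is generically invertible, I would factor out $({\bf S}\otimes I_2)$ and $(I_p\otimes F(s))$ via the identity $({\bf S}\otimes I_2)(I_p\otimes M) = {\bf S}\otimes M$ and reduce the condition, up to a nonvanishing factor, to
\begin{equation*}
0 = \det\bigl(I_p\otimes [{\bf Y_{PLL}}(s)F^{-1}(s)] + ({\bf S}^{-1}Q_{\rm red})\otimes I_2\bigr).
\end{equation*}

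Second, I would diagonalize the weighted network matrix ${\bf S}^{-1}Q_{\rm red}$. Because ${\bf S}^{-1/2}Q_{\rm red}{\bf S}^{-1/2}$ is real symmetric (and positive definite, since $Q_{\rm red}$ is the Kron reduction of a grounded Laplacian anchored at the infinite bus), the matrix ${\bf S}^{-1}Q_{\rm red}$ is similar to a symmetric positive definite matrix and admits a decomposition $T\Lambda T^{-1}$ with real eigenvalues $\lambda_1\le\lambda_2\le\cdots\le\lambda_p$. Applying the similarity $(T\otimes I_2)(\cdot)(T^{-1}\otimes I_2)$ block-diagonalizes the operator above, so the characteristic equation factorizes as
\begin{equation*}
0 = \prod_{i=1}^{p} \det\bigl[{\bf Y_{PLL}}(s)F^{-1}(s) + \lambda_i I_2\bigr].
\end{equation*}
Each factor is precisely the characteristic equation treated in Proposition~\ref{prop:Single_Converter_system}, namely a single PLL-based converter tied to an infinite bus through an equivalent line of susceptance $\lambda_i$. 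Invoking that proposition on every factor converts the product condition into the $p$ scalar equations $\gamma(s)+\lambda_i = 0$.

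Finally, I would argue that the dominant poles (the rightmost roots in the $s$-plane) come from the smallest eigenvalue $\lambda_1$. This is the step I expect to be the main obstacle, since it is the only non-mechanical fact in the proof: it relies on the monotonicity that the roots of $\gamma(s)+\lambda=0$ migrate toward the right half-plane as the equivalent grid strength $\lambda$ shrinks, consistent with the well-known destabilizing effect of a weak grid on PLL-based converters. I would quote this monotonicity from \cite{dong2018small}; if a self-contained justification were demanded, I would supply it via a root-locus or Nyquist argument on the scalar loop gain $-\gamma(s)/\lambda$, exploiting the specific structure of ${\bf Y_{PLL}}(s)F^{-1}(s)$ already used in deriving \eqref{eq:Impedance_PLL}. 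Once this ordering is accepted, the dominant poles of the multi-converter system coincide with the roots of $0 = \gamma(s)+\lambda_1$, which is exactly the claim and also justifies the name ``generalized short-circuit ratio'' for $\lambda_1$.
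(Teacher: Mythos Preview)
Your proposal is correct and follows essentially the same route as the paper: derive the closed-loop characteristic equation from \eqref{eq:network_dynamics} and \eqref{eq:converter_dynamics}, diagonalize ${\bf S}^{-1}Q_{\rm red}$ to factor it into $p$ single-converter subsystems $\det[{\bf Y_{PLL}}(s)+\lambda_i F(s)]=0$, then invoke Proposition~\ref{prop:Single_Converter_system} and the monotonicity in $\lambda$ (which the paper also dispatches by citation to \cite{dong2018small,huang2019impacts}) to identify $\lambda_1$ as the dominant factor. Your explicit remark that diagonalizability follows from similarity to the symmetric positive definite matrix ${\bf S}^{-1/2}Q_{\rm red}{\bf S}^{-1/2}$, and your flagging of the monotonicity step as the only non-mechanical ingredient, are both slightly more careful than the paper's own write-up.
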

\begin{proof}
It can be deduced from \eqref{eq:network_dynamics} and \eqref{eq:converter_dynamics} that the characteristic equation of such a system ($m=0$) is
\begin{equation}\label{eq:multi_PLL_det}
0 = {\rm det}[I_p \otimes {\bf Y_{PLL}}(s) + ({\bf S}^{-1}Q_{\rm red}) \otimes F(s)] \,,
\end{equation}
which is equivalent to
\begin{equation}\label{eq:multi_PLL_det1}
\begin{split}
0 = \; & {\rm det}\{(T^{-1}\otimes I_2)[I_p \otimes {\bf Y_{PLL}}(s) \\
\; & \;\;\;\;\;\;\;+ {\bf S}^{-1}Q_{\rm red} \otimes F(s)](T \otimes I_2)\} \\
= \; & {\rm det}[I_p \otimes {\bf Y_{PLL}}(s) + \Lambda_p \otimes F(s)] \\
= \; & \prod\limits_{i=1}^{p} {{\rm det}[{\bf Y_{PLL}}(s) + \lambda_i F(s)]} \,,
\end{split}
\end{equation}
where $T \in \mathbb{R}^{p \times p}$ diagonalizes ${\bf S}^{-1}Q_{\rm red}$ as $T^{-1}{\bf S}^{-1}Q_{\rm red}T = \Lambda_p = {\rm diag}\{\lambda_1,\lambda_2,...,\lambda_p\}$ ($\lambda_1<\lambda_2<...<\lambda_p$).

Hence, the system that has $p$ PLL-based converters can be decoupled into $p$ subsystems as indicated by \eqref{eq:multi_PLL_det1}. Moreover, it can be deduced that the dominant poles is determined by the weakest system, which is in fact a single PLL-based converter connected to an infinite bus with susceptance $\lambda_1$ \cite{dong2018small, huang2019impacts}.
Then, by Proposition~\ref{prop:Single_Converter_system}, the dominant poles of this weakest system are determined by \eqref{eq:PLL_det}, which concludes the proof.
\end{proof}

\begin{remark}[Power Grid Strength]\label{Remark1}
The value of $\lambda_1$ in \eqref{eq:PLL_det} reflects the network connectivity and thus the power grid strength, also defined in \cite{dong2018small} as the generalized short-circuit ratio (gSCR) of the system. Moreover, $\lambda_1$ determines the stability margin of PLL-based multi-converter systems (a larger $\lambda_1$ indicates a larger stability margin), and a low $\lambda_1$ may give rise to PLL-induced instabilities \cite{huang2019impacts}.
\end{remark}

Based on the above results, it can be deduced by comparing \eqref{eq:det_perturb1} and \eqref{eq:PLL_det} that placing grid-forming converters is equivalent to the change of power grid strength (i.e., gSCR) from $\lambda_1$ to $\lambda'_1 + \Delta \lambda (s)$. We note that when the capacity of the grid-forming converter is sufficiently large (i.e., $\lambda'_1$ is relatively large in such cases), $\Delta \lambda (s)$ in \eqref{eq:Delta_lambda} has magnitudes small enough to be ignored in the frequency range of control effects because ${\bf Y_{GF}}(s)$ has large magnitudes due to the voltage control. We will verify this issue when conducting case studies. Hence, we have the following statement.

\begin{remark}[Grid-Forming Control and Power Grid Strength]
\label{Remark2}
The main impact of placing grid-forming converters (i.e., changing the control scheme of a converter from PLL-based control to grid-forming control) can be interpreted as changing the power grid strength (i.e., gSCR) from $\lambda_1$ to $\lambda'_1$.
\end{remark}

For clarity, we recall that we consider a multi-converter system which contains $n$ PLL-based converters and one grid-forming converter as in \eqref{eq:det_closed_loop}, and $\lambda_1$ is the smallest eigenvalue of ${\bf S}^{-1}Q_{\rm red} \in \mathbb{R}^{(n+1) \times (n+1)}$ while $\lambda'_1$ is the smallest eigenvalue of ${\bf S}_{\bf B}^{-1}Q_n$ ($Q_n$ is defined in \eqref{eq:Q_n}). Note that according to \cite[Lemma 1]{dong2018small}, there holds $\lambda'_1 >0$ and $\lambda_1>0$. The following lemma is given to compare $\lambda'_1$ with $\lambda_1$.

\begin{lemma}
\label{Lemma_lambda}
Consider two weighted Kron-reduced Laplacian matrices ${\bf S}^{-1}Q_{\rm red} \in \mathbb{R}^{(n+1) \times (n+1)}$ and ${\bf S}_{\bf B}^{-1}Q_n \in \mathbb{R}^{n \times n}$, where ${\bf S}_{\bf B}^{-1}Q_n$ is a submatrix of ${\bf S}^{-1}Q_{\rm red}$ considering \eqref{eq:Q_n} and ${\bf S} = {\rm diag}\{{\bf S}_{\bf B},1\}$. It holds that $\lambda_1 < \lambda'_1$, where $\lambda_1$ and $\lambda'_1$ are respectively the smallest eigenvalues of ${\bf S}^{-1}Q_{\rm red}$ and ${\bf S}_{\bf B}^{-1}Q_n$.
\end{lemma}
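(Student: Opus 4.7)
The plan is to reduce to a symmetric eigenvalue problem, apply Cauchy's interlacing theorem to obtain the non-strict bound $\lambda_1 \le \lambda'_1$, and then upgrade it to a strict inequality by a Perron--Frobenius argument that exploits the $M$-matrix structure of $Q_{\rm red}$.

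First, I would symmetrize each generalized eigenvalue problem. Since ${\bf S}$ and ${\bf S}_{\bf B}$ are positive diagonal, the eigenvalues of ${\bf S}^{-1}Q_{\rm red}$ coincide with those of $\tilde Q_{\rm red} := {\bf S}^{-1/2} Q_{\rm red} {\bf S}^{-1/2}$, and $\lambda'_1$ is the smallest eigenvalue of $\tilde Q_n := {\bf S}_{\bf B}^{-1/2} Q_n {\bf S}_{\bf B}^{-1/2}$. Because $Q_{\rm red}$ inherits symmetry and positive definiteness from the grounded Laplacian $Q$ via Kron reduction, both $\tilde Q_{\rm red}$ and $\tilde Q_n$ are symmetric positive definite; moreover, the block decomposition \eqref{eq:Q_n} combined with ${\bf S} = \mathrm{diag}\{{\bf S}_{\bf B}, 1\}$ guarantees that $\tilde Q_n$ is precisely the leading $n \times n$ principal submatrix of $\tilde Q_{\rm red}$. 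Cauchy's interlacing theorem---or equivalently the Courant--Fischer characterization, observing that restricting the Rayleigh-quotient minimization $x^{\top}\tilde Q_{\rm red} x / x^{\top} x$ to vectors with zero last entry can only increase the minimum---then immediately yields $\lambda_1 \le \lambda'_1$.

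The main obstacle is promoting this weak inequality to a strict one, and this is where I would invoke the $M$-matrix structure. Since the electrical network is connected to the infinite bus, $Q$ is an irreducible symmetric positive definite $M$-matrix, and Kron reduction preserves both the $M$-matrix property and irreducibility, so $Q_{\rm red}^{-1}$ is entrywise strictly positive. Consequently $\tilde Q_{\rm red}^{-1} = {\bf S}^{1/2} Q_{\rm red}^{-1} {\bf S}^{1/2}$ is also entrywise strictly positive, and the Perron--Frobenius theorem then guarantees that its dominant eigenvector---which is simultaneously the $\lambda_1$-eigenvector of $\tilde Q_{\rm red}$---is simple and has all strictly positive entries. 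If the equality $\lambda_1 = \lambda'_1$ held, the restricted Rayleigh-quotient minimization would be attained at a vector of the form $(y^\star, 0)^{\top}$, producing a $\lambda_1$-eigenvector of $\tilde Q_{\rm red}$ with a vanishing component, contradicting the strict positivity just established. Therefore $\lambda_1 < \lambda'_1$, as claimed.
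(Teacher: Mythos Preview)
Your proof is correct and proceeds along the same line as the paper, which simply cites the interlacing theorem (Horn--Johnson, Theorem~4.3.8) after the implicit symmetrization to $\tilde Q_{\rm red}$. You in fact go further than the paper: Cauchy interlacing by itself only yields $\lambda_1 \le \lambda'_1$, and your Perron--Frobenius step---using that $Q_{\rm red}$ is an irreducible symmetric $M$-matrix so that the $\lambda_1$-eigenvector of $\tilde Q_{\rm red}$ is componentwise strictly positive---is precisely what upgrades this to the strict inequality the lemma asserts, a point the paper's one-line citation leaves implicit.
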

\begin{proof}
The claimed result can be obtained by considering the interlacing theorem in \cite[Theorem 4.3.8]{horn2012matrix}.
\end{proof}

With Remark~\ref{Remark2} and Lemma~\ref{Lemma_lambda}, it can be further deduced that the placement of a grid-forming converter is equivalent to increasing the power network strength (characterized by gSCR) from $\lambda_1$ to $\lambda'_1$. Moreover, since $\lambda'_1$ is the smallest eigenvalue obtained by deleting the $(n+1) {\rm th}$ row and $(n+1) {\rm th}$ column of ${\bf S}^{-1}Q_{\rm red}$, the case of multiple grid-forming converters can be analyzed by repeating the above analysis and calculating the smallest eigenvalue of ${\bf S}^{-1}Q_{\rm red}$ after deleting the rows and columns corresponding to the nodes of grid-forming converters. Note that if the grid-forming converter is labelled as the $i{\rm th}$ converter in \eqref{eq:det_closed_loop}, then $\lambda'_1$ should be the smallest eigenvalue obtained by deleting the $i {\rm th}$ row and $i {\rm th}$ column of ${\bf S}^{-1}Q_{\rm red}$.
We summarize the above finding in the following proposition.

\begin{proposition}[Placement of Grid-Forming Converters]
\label{prop:Grid_Forming}
Consider a multi-converter system whose closed-loop dynamics are described by Fig.~\ref{Fig_closed_loop} (the number of grid-forming converters could be $0$), the PLL-induced small-signal stability of the system is improved by changing the control scheme of any PLL-based converter to grid-forming control.
\end{proposition}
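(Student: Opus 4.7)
The plan is to assemble the statement from three results already proved in this section rather than redo the perturbation calculation. Remark~\ref{Remark2}, together with the derivation \eqref{eq:det_perturb}--\eqref{eq:det_perturb1}, interprets the placement of one grid-forming converter as replacing the effective grid strength $\lambda_1$ in the single-converter equation $0=\gamma(s)+\lambda_1$ by $\lambda'_1$, the smallest eigenvalue of $\mathbf{S_B}^{-1}Q_n$; Lemma~\ref{Lemma_lambda} gives $0<\lambda_1<\lambda'_1$; and Remark~\ref{Remark1} converts this monotonicity into a statement about the PLL-induced stability margin. So my skeleton is: (i) reduce to the elementary case of converting exactly one PLL-based converter, (ii) invoke Remark~\ref{Remark2} to identify the new gSCR as $\lambda'_1$, and (iii) apply Lemma~\ref{Lemma_lambda} together with Remark~\ref{Remark1} to finish.

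For step (i) I would argue by peeling. If the starting configuration already contains some grid-forming converters, the block-diagonal structure in \eqref{eq:converter_dynamics} and the Schur-type factorization used in \eqref{eq:det_closed_loop} let me treat each pre-existing grid-forming converter exactly as was done for the single GF converter between \eqref{eq:det_closed_loop} and \eqref{eq:det_closed_loop1}: every such converter contributes a $\mathrm{det}[\mathbf{C_{GF}}(s)]$ factor (stable by design) and effectively removes its node from the weighted Kron-reduced Laplacian seen by the PLL-based converters. Relabelling the PLL-based converter about to be converted as the last node of the remaining submatrix then reduces the problem to the one-GF case already analysed.

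For step (ii), with one new grid-forming converter inserted, \eqref{eq:det_perturb1} reads $\gamma(s)+\lambda'_1+\Delta\lambda(s)\approx 0$ with $\Delta\lambda(s)$ given by \eqref{eq:Delta_lambda}. Because the voltage loop makes $\|\mathbf{Y_{GF}}(s)\|$ large over the frequency band that governs the PLL-induced dominant poles, $F(s)\mathbf{Y_{GF}}^{-1}(s)$ is uniformly small there, so both $\Delta\lambda(s)$ and the higher-order residual $o(\|F(s)\mathbf{Y_{GF}}^{-1}(s)\|^2)$ appearing in \eqref{eq:det_perturb} can be dropped, which is exactly the content of Remark~\ref{Remark2} and lets me read the new effective gSCR as $\lambda'_1$. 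Step (iii) is then immediate: Lemma~\ref{Lemma_lambda} gives $\lambda'_1>\lambda_1$, and by Remark~\ref{Remark1} the PLL-induced dominant pole moves deeper into the open left half plane as the gSCR grows.

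The real difficulty, and where I would concentrate care, sits in step (ii): the justification that $\Delta\lambda(s)$ may actually be discarded when comparing $0=\gamma(s)+\lambda_1$ with $0\approx\gamma(s)+\lambda'_1+\Delta\lambda(s)$ is a modelling claim about the size of $\mathbf{Y_{GF}}(s)$ on the dominant-mode frequency band, which the paper defers to the simulations of Section~V. For a fully rigorous version I would state a quantitative bound on $\|F(s)\mathbf{Y_{GF}}^{-1}(s)\|$ along the imaginary axis and close the gap with a Rouch\'e-type continuity argument, concluding that the zeros of $\gamma(s)+\lambda'_1+\Delta\lambda(s)$ remain in the open left half plane whenever those of $\gamma(s)+\lambda'_1$ do, which by Remark~\ref{Remark1} is guaranteed by the stability of the original $\gamma(s)+\lambda_1$.
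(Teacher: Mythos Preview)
Your proposal is correct and follows essentially the same approach as the paper: the paper's proof is a one-line appeal to Remark~\ref{Remark1}, Remark~\ref{Remark2}, and Lemma~\ref{Lemma_lambda}, and your three steps unpack exactly those ingredients. Your added care about the peeling argument for pre-existing grid-forming converters and the Rouch\'e-type closure for the $\Delta\lambda(s)$ term goes beyond the paper, which simply asserts that the statement ``rigorously follows'' without addressing the modelling gap you identify.
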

\begin{proof}
The above statement rigorously follows from Remark~\ref{Remark1}, Remark~\ref{Remark2} and Lemma~\ref{Lemma_lambda}.
\end{proof}

So far, we have theoretically explained how the placement of grid-forming converters enhances the power grid strength and thus the (PLL-induced) small-signal stability. One remaining question is how to optimally place the grid-forming converters to improve the stability, which will be explored in the following section.

\section{Optimal Placement of Grid-Forming Converters}
\label{sec:4}

This section investigates the problem of how to optimally place grid-forming converters to improve the system stability. To be specific, we consider a power system that is integrated with $p$ PLL-based converters, and $q$ ($q < p$) of them will be changed to use grid-forming control instead of PLL-based control. Define a symmetric weighted Laplacian matrix $\mathcal L = {\bf S}^{-\frac{1}{2}}Q_{\rm red}{\bf S}^{-\frac{1}{2}}$, which shares the same eigenvalues with ${\bf S}^{-1}Q_{\rm red}$ because they are similar matrices.
According to the analysis and results in the previous section, the optimal locations to place these grid-forming converters can be obtained by solving the following optimization problem
\begin{equation}\label{eq:optimal_placement}
\max \limits_{\mathcal{I} \subset \mathcal{V}}\;\;\lambda_{\min}[\mathcal{R}_{\mathcal{V} \backslash \mathcal{I}}(\mathcal{L})] \,,
\end{equation}
where $\mathcal{V} = \{1,2,...,p\}$ is the set that denotes the converter nodes, $\mathcal{I} \subset \mathcal{V}$ is the set of the $q$ nodes to place the grid-forming converters which will be obtained by solving the optimization problem, $\lambda_{\min}(\cdot)$ denotes the smallest eigenvalue of a symmetric matrix, $\mathcal{L} \in \mathbb{R}^{p \times p}$ is the symmetric weighted Laplacian matrix to represent the power network and the capacities of the converters, and $\mathcal{R}_{\mathcal{V} \backslash \mathcal{I}}(\mathcal{L})$ denotes the remaining matrix after deleting the rows and columns included in the set $\mathcal{I}$. That is to say, \eqref{eq:optimal_placement} aims at selecting the locations of grid-forming converters to equivalently enhance the power grid strength (i.e., gSCR) and thus the system stability.

Since \eqref{eq:optimal_placement} is a combinatorial optimization problem, one can solve it by simply enumerating all possible results, which is doable if the system is small-scale. However, if the system is integrated with large-scale converters, the computational burden of the enumeration will be unacceptable. As a remedy, we propose a greedy method to obtain a suboptimal solution for the placement of grid-forming converters, and we will show by case studies that this suboptimal solution is identical to the optimal solution in some cases and has very satisfactory performance to be used in practice. To be specific, we consider now the following iterative optimization problem that will be solved for $q$ times
\begin{equation}\label{eq:optimal_placement1}
\max \limits_{\alpha_i \in \mathcal{V}}\;\;\lambda_{\min}[\mathcal{R}_{\mathcal{V} \backslash \alpha_i}(\mathcal{L}^{[i]})] \,,
\end{equation}
where $i \in \{1,2,...,q\}$ denotes the iteration number, $\alpha_i \in \mathcal{V}$ is the node to place a grid-forming converter that will be determined by the $i{\rm th}$ iteration, $\mathcal{R}_{\mathcal{V} \backslash \alpha_i}(\cdot)$ is a function to delete the row and the column that are corresponding to the Node~$\alpha_i$ defined in $\mathcal{L}$, $\mathcal{L}^{[i+1]} = \mathcal{R}_{\mathcal{V} \backslash \alpha_i}(\mathcal{L}^{[i]})$, with $\mathcal{L}^{[1]} = \mathcal{L} \in \mathbb{R}^{p \times p}$. Hence, by iteratively solving \eqref{eq:optimal_placement1} for $q$ times using a greedy heuristic, the locations for placing the $q$ grid-forming converters can be obtained. We summarize the solving process in the following.

\begin{algorithm}[h]
\caption{\normalsize Greedy Solution to \eqref{eq:optimal_placement}} \label{Algorithm:greedy}
{\bf{Input:}} Kron-reduced Laplacian matrix: $Q_{\rm red} \in \mathbb{R}^{p \times p}$, capacity ratio matrix: ${\bf S} \in \mathbb{R}^{p \times p}$, the number of grid-forming converters to be placed: $q$
	
\begin{enumerate}[ 1)]
	\item Initialize $i=1$, $\mathcal{L}^{[i]}|_{i=1} = \mathcal{L} = {\bf S}^{-\frac{1}{2}}Q_{\rm red}{\bf S}^{-\frac{1}{2}}$.
	\item Solve \eqref{eq:optimal_placement1} for $\alpha_i$.
	\item Calculate $\mathcal{L}^{[i+1]} = \mathcal{R}_{\mathcal{V} \backslash \alpha_i}(\mathcal{L}^{[i]})$.
	\item Iterate through Steps 2-3 for $i = 2,3,...,q$.
\end{enumerate}

{\bf{Output:}} The (sub)optimal locations for the $q$ grid-forming converters, i.e., $\alpha_i$ ($i \in \{1,2,...,q\}$).

\end{algorithm}

In Step~2, a trivial way to obtain the solution of \eqref{eq:optimal_placement1} (without loss of generality we assume $i=1$ here) is to enumerate the smallest eigenvalues of $\mathcal{R}_{\mathcal{V} \backslash \alpha_1}(\mathcal{L})$ with $\alpha_1 = 1,2,...,p$ and then pick the largest one. This method requires to do eigenvalue calculation for $p$ times, which may also have high computational burden in large-scale systems. In fact, deleting the $\alpha_1 {\rm th}$ row and the $\alpha_1 {\rm th}$ column can be interpreted as connecting Node~$\alpha_1$ directly to the grounded node (i.e., the infinite bus). Hence, the solution of \eqref{eq:optimal_placement1} can be considered as the ``farthest'' node from the grounded node, considering that connecting the farthest node to the grounded node will increase the network connectivity (reflected by the smallest eigenvalue \cite{dorfler2013kron,dorfler2018electrical}) to the most extent.

This farthest node can be approximately located by checking the participation factors of the nodes on the smallest eigenvalue of $\mathcal{L}$ (i.e., $\lambda_1$). According to \cite{huang2019impacts}, the participation factor of Node~$i$ on $\lambda_1$ equals the sensitivity of $\lambda_1$ to the self-susceptance of Node~$i$ (included in $\mathcal{L}_{ii}$), formulated as
\begin{equation}\label{eq:Q_red_sensitivity}
p_{1,i} = \frac{\partial \lambda_1}{\partial \mathcal{L}_{ii}} = v_{1,i}u_{1,i}
\end{equation}
where $p_{1,i}$ denotes the participation factor of Node~$i$ on $\lambda_1$, $\mathcal{L}_{ii}$ denotes the $i{\rm th}$ diagonal entry of $\mathcal{L}$, $v_1 \in \mathbb{R}^{p}$ and $u_1 \in \mathbb{R}^{p}$ are respectively the left and right eigenvectors corresponding to $\lambda_1$ (i.e., $v_1^\top \mathcal{L}u_1 = \lambda_1$), $v_{1,i}$ and $u_{1,i}$ are the $i{\rm th}$ entries of $v_1$ and $u_1$.
Then, we pick the node that has the largest participation factor among all the $p$ nodes, which can be considered as an suboptimal solution to \eqref{eq:optimal_placement1} (we will showcase that this suboptimal solution leads to very satisfactory results and in some case it would be identical to the optimal solution). Note that this method only requires to do the eigendecomposition once, which is much more efficient than the enumeration method.

\section{Simulation Results}

In this section, we provide detailed simulation results to illustrate the validity of our analysis and the effectiveness of the proposed algorithm for the optimal placement of grid-forming converters.

\subsection{Case studies on a four-converter test system}

Without loss of generality, we consider a two-area four-converter test system as shown in Fig.~\ref{Fig_2Area}. Also note that our analysis and conclusions in this paper are general to other multi-converter systems with arbitrary network topology. In Fig.~\ref{Fig_2Area}, where Nodes~$1 \sim 4$ are converter nodes, Nodes~$5 \sim 9$ are interior nodes (which will be eliminated via Kron reduction as in \eqref{eq:Qred}), and Node~$10$ is the infinite bus (considered as the grounded node in the small-signal modeling). For simplicity, the capacities of the converters are assumed to be the same, i.e., $\bf S$ is an identity matrix and thus $\mathcal{L} = Q_{\rm red}$.
The main parameters of the systems are given in the Appendix, and the Kron-reduced Laplacian matrix is calculated as
\begin{equation*}
Q_{\rm red} = \mathcal{L} = \left[\scriptsize {\begin{array}{*{20}{r}}
	{8.07}& {-3.86}& {-0.20} & {-0.27} \\
	{-3.86}&{12.27}& {-0.41} & {-0.54}  \\
	{-0.20}&{-0.41}& {4.04}  & {-1.27} \\
	{-0.27}&{-0.54}& {-1.27} & {4.9675} \\	
\end{array}} \right] \,,
\end{equation*}
with the smallest eigenvalue being $\lambda_1 = 3.00$.

\begin{figure}[!t]
	\centering
	\includegraphics[width=3.2in]{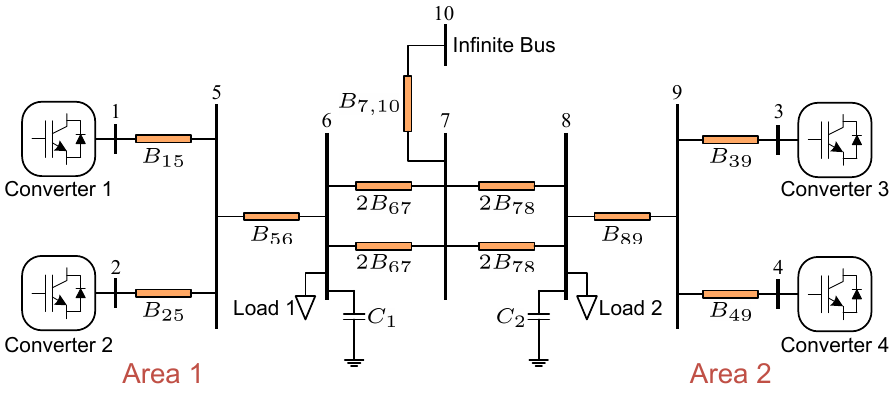}
	\vspace{-3mm}
	\caption{A two-area test system.}
	\vspace{0mm}
	\label{Fig_2Area}
\end{figure}

To begin with, we give the Bode diagram of $\Delta \lambda (s)$ (as defined in \eqref{eq:det_perturb1} and \eqref{eq:Delta_lambda}) to verify our claim in Section~\ref{sec:3} that the magnitudes of $\Delta \lambda (s)$ are small enough to be ignored. It can be seen from Fig.~\ref{Fig_Bode_Delta_Lambda} that in the frequency range of interest (within $1{\rm Hz} \sim 200{\rm Hz}$ regarding PLL-induced instabilities), the magnitudes of $\Delta \lambda (s)$ are around $0.01$, which are small enough to be ignored. Hence, $\Delta \lambda (s)$ can be ignored when analyzing PLL-induced instabilities, which leads to the claimed Remark~\ref{Remark2} and the subsequent results in Section~\ref{sec:4}.

\begin{figure}[!t]
	\centering
	\includegraphics[width=2.5in]{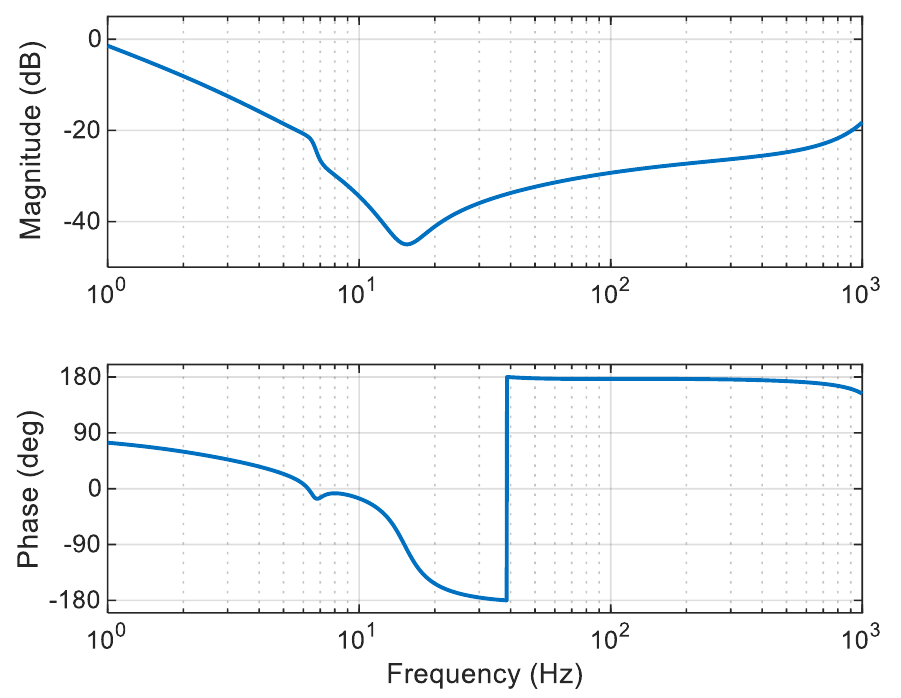}
	\vspace{-3mm}
	\caption{The Bode diagram of $\Delta \lambda (s)$ with the parameters in the Appendix.}
	\vspace{-3mm}
	\label{Fig_Bode_Delta_Lambda}
\end{figure}

Based on the proposed method in Section~\ref{sec:4}, we explore now the optimal placement of grid-forming converters in the two-area test system. Assume that two (out of four) converters will be changed from PLL-based control to grid-forming control, i.e., $q=2$. According to the analysis in Section~\ref{sec:4}, the optimal locations of these two grid-forming converters can be obtained by solving combinatorial optimization problem in \eqref{eq:optimal_placement}. In Table~\ref{table:lambda_min}, we enumerate all the possible combinations. Obviously, the optimal locations are $\{3,4\}$, that is, by changing the control schemes of Converter~3 and Converter~4 from PLL-based control to grid-forming control, the (PLL-induced) stability margin will be increased to the most extent.

\vspace{-2mm}

\renewcommand\arraystretch{1.45}
\begin{table}[h!]
	\scriptsize
	\centering
	\caption{The value of $\lambda_{\min}[\mathcal{R}_{\mathcal{V} \backslash \mathcal{I}}(\mathcal{L})]$ with different $\mathcal{I}$}
    \vspace{-1mm}
    \begin{tabular}{|c|c|c|c|c|c|c|}	
    \hline
    $\mathcal{I}$ & $\{1,2\}$ & $\{1,3\}$ & $\{1,4\}$ & $\{2,3\}$ & $\{2,4\}$ & $\{3,4\}$  \\
    \hline
    {\hspace{-2.5mm} $\lambda_{\min}[\mathcal{R}_{\mathcal{V} \backslash \mathcal{I}}(\mathcal{L})]$ \hspace{-2.5mm}} & {3.1504} & {4.9270} & {4.0239 } & {4.9437} & {4.0338} & {5.7711} \\
    \hline
	\end{tabular}
	\vspace{-1mm}
	\label{table:lambda_min}
	\end{table}

On the other hand, solving \eqref{eq:optimal_placement} through enumeration will result in unacceptable computational burden in large-scale systems. To alleviate this problem, the greedy solution obtained by Algorithm~\ref{Algorithm:greedy} can be used.

In the following, we still consider the two-area system in Fig.~\ref{Fig_2Area} and $q=2$, and use Algorithm~\ref{Algorithm:greedy} to obtain the two optimal locations. For a first step, we solve \eqref{eq:optimal_placement1} for the first location $\alpha_1$ (i.e., $i=1$ and $\mathcal{L}^{[1]} = \mathcal{L}$). Table~\ref{table:lambda_min1} enumerates all the possible results, and obviously, the solution is $\alpha_1 = 3$.
As we discussed in Section~\ref{sec:4}, another convenient (and computationally efficient) way to obtain $\alpha_1$ is to check the participation factors, as shown in \eqref{eq:Q_red_sensitivity}. Table~\ref{table:p_i1} shows the participation factors of the converter nodes on the smallest eigenvalue of $\mathcal{L}^{[1]}$. It can be seen that Node~3 has the largest participation factor and thus $\alpha_1 = 3$, which is consistent with the enumeration results in Table~\ref{table:lambda_min1}.

\renewcommand\arraystretch{1.45}
\begin{table}[h!]
	\scriptsize
	\centering
	\caption{\vspace{1mm} The value of $\lambda_{\min}[\mathcal{R}_{\mathcal{V} \backslash \alpha_1}(\mathcal{L}^{[1]})]$ with different $\alpha_1$}
	\vspace{-2mm}
	\begin{tabular}{|c|c|c|c|c|}
		\hline
        $\alpha_1$ & $1$ & $2$ & $3$ & $4$ \\
        \hline
        $\lambda_{\min}[\mathcal{R}_{\mathcal{V} \backslash \alpha_1}(\mathcal{L}^{[1]})]$ & {3.1046} & {3.1292} & {4.7091} & {3.9570} \\
        \hline
	\end{tabular}	
	\vspace{-6mm}
	\label{table:lambda_min1}
\end{table}

\renewcommand\arraystretch{1.45}
\begin{table}[h!]
	\scriptsize
	\centering
	\caption{Participation factors of $\mathcal{L}^{[1]}$}
	\vspace{-2mm}
	\begin{tabular}{|c|c|c|c|c|}
		\hline
        $\alpha_1$ & $1$ & $2$ & $3$ & $4$ \\
        \hline
        Participation factor & {0.0284} & {0.0193} & {0.6231} & {0.3292} \\
        \hline
	\end{tabular}		
	\vspace{-1mm}
	\label{table:p_i1}
\end{table}

After selecting the first optimal location $\alpha_1$, we have
\begin{equation*}
\mathcal{L}^{[2]} = \mathcal{R}_{\mathcal{V} \backslash \alpha_1}(\mathcal{L}^{[1]}) = \left[\scriptsize {\begin{array}{*{20}{r}}
{8.07}& {-3.86} & {-0.27} \\
{-3.86}&{12.27} & {-0.54}  \\
{-0.27}&{-0.54} & {4.9675} \\	
\end{array}} \right] \,.
\end{equation*}
Note that the row~3 and column~3 of the above matrix are corresponding to the Converter~4 in Fig.~\ref{Fig_2Area}.
Then, we solve \eqref{eq:optimal_placement1} for the second location $\alpha_2$. Table~\ref{table:lambda_min2} enumerates all the remaining converter nodes, and it can be seen that the optimal solution is $\alpha_2 = 4$. To illustrate that this optimal location can also simply obtained by checking the participation factors of the remaining converters nodes on the smallest eigenvalue, Table~\ref{table:p_i2} gives the participation factors of the Nodes~1, 2 and 4, which indicates that Converter~4 has the largest participation factor and thus the second optimal location is $\alpha_2 = 4$, consistent with the result obtained in Table~\ref{table:lambda_min2}.


\renewcommand\arraystretch{1.45}
\begin{table}[h!]
	\scriptsize
	\centering
	\caption{\vspace{1mm} The value of $\lambda_{\min}[\mathcal{R}_{\mathcal{V} \backslash \alpha_2}(\mathcal{L}^{[2]})]$ with different $\alpha_2$}
	\vspace{-2mm}
	\begin{tabular}{|c|c|c|c|}
		\hline
        $\alpha_2$ & $1$ & $2$ & $4$ \\
        \hline
        $\lambda_{\min}[\mathcal{R}_{\mathcal{V} \backslash \alpha_2}(\mathcal{L}^{[2]})]$ & {4.9270} & {4.9437} & {5.7711} \\
        \hline
	\end{tabular}		
	\vspace{-2mm}
	\label{table:lambda_min2}
\end{table}

\renewcommand\arraystretch{1.45}
\begin{table}[h!]
	\scriptsize
	\centering
	\caption{Participation factors of $\mathcal{L}^{[2]}$}
	\vspace{-3mm}
	\begin{tabular}{|c|c|c|c|}
		\hline
        $\alpha_2$ & $1$ & $2$ & $4$ \\
        \hline
        Participation factor & {0.1283} & {0.0614} & {0.8103} \\
        \hline
	\end{tabular}	
	\vspace{2mm}
	\label{table:p_i2}
\end{table}

It can be seen from the above results that the greedy solution obtained by employing Algorithm~\ref{Algorithm:greedy} (i.e., $\alpha_1 = 3$ and $\alpha_2 = 4$) is fully consistent with the optimal solution by directly solving \eqref{eq:optimal_placement} (i.e., $\mathcal{I} = \{3,4\}$), which verifies the effectiveness of the greedy algorithm. Although the greedy algorihm is not guaranteed to reach the optimal solution of \eqref{eq:optimal_placement}, i.e., sometimes it would lead to a suboptimal solution, it can generally satisfy the practical expectation and the obtained locations to place the grid-forming converters can significantly improve the system stability. We consider the development of rigorously optimal and computationally efficient solutions to \eqref{eq:optimal_placement} as future works.

In the following, we provide time-domain simulation results and eigenvalue analysis on the two-area test system in Fig.~\ref{Fig_2Area} to further verify the effectiveness of the optimal placement of grid-forming converters. Fig.~\ref{Fig_sim} shows the converters' responses under different control settings (an overload event occurs at 0.2s and is cleared after 0.02s). When all the four converters apply PLL-based control (Case~1), the system is oscillating (critically stable) which is caused by the interaction between PLLs and weak grid condictions. When applying grid-forming control in Converters~1 and 2 (Case~2), the damping ratio is improved and the oscillation is restrained, that is, the placements of grid-forming converters improve the system stability. Moreover, by applying grid-forming control in Converters~3 and 4 (Case~3), the system has very high damping ratio, indicating a satisfactory stability margin, which is consistent with the above results on the optimal placements of grid-forming converters.

\begin{figure}[!t]
	\centering
	\includegraphics[width=3.4in]{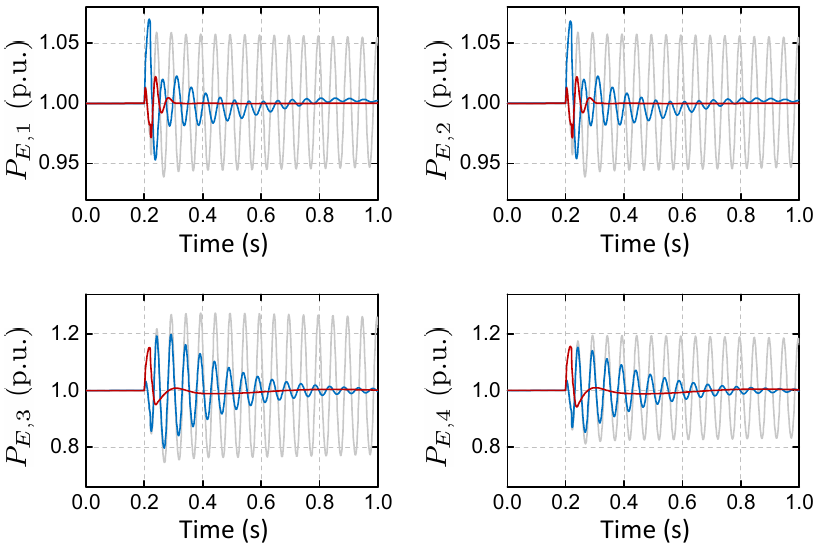}
	\vspace{-2mm}
	\caption{Time-domain responses of the four converters when placing grid-forming controls at different spots (an overload event occurs at 0.2s and is cleared after 0.02s). {\color{CGREY} \bf ---} : Case~1 (All the converters use PLL-based control); {\color{CBLUE} \bf ---} : Case~2 (Grid-forming control is applied in Converters~1 and 2); {\color{CRED} \bf ---} : Case~3 (Grid-forming control is applied in Converters~3 and 4).}
	\vspace{0mm}
	\label{Fig_sim}
\end{figure}

%

Fig.~\ref{Fig_Eigen_4area} plots the dominant poles of the four-converter test system corresponding to the above control settings (Cases~$1 \sim 3$). Compared to Case~1 (which has very low damping ratio), the damping ratio is increased to about 0.05 when applying grid-forming control in Converters~1 and 2, and is increased to about 0.4 when applying grid-forming control in Converters~3 and 4. The above eigenvalue results are fully aligned with the time-domain responses in Fig.~\ref{Fig_sim}, which again verifies the validity of the analysis in this paper and the effectiveness of the proposed algorithm for optimally placing grid-forming converters.

\begin{figure}[!t]
	\centering
	\includegraphics[width=2.6in]{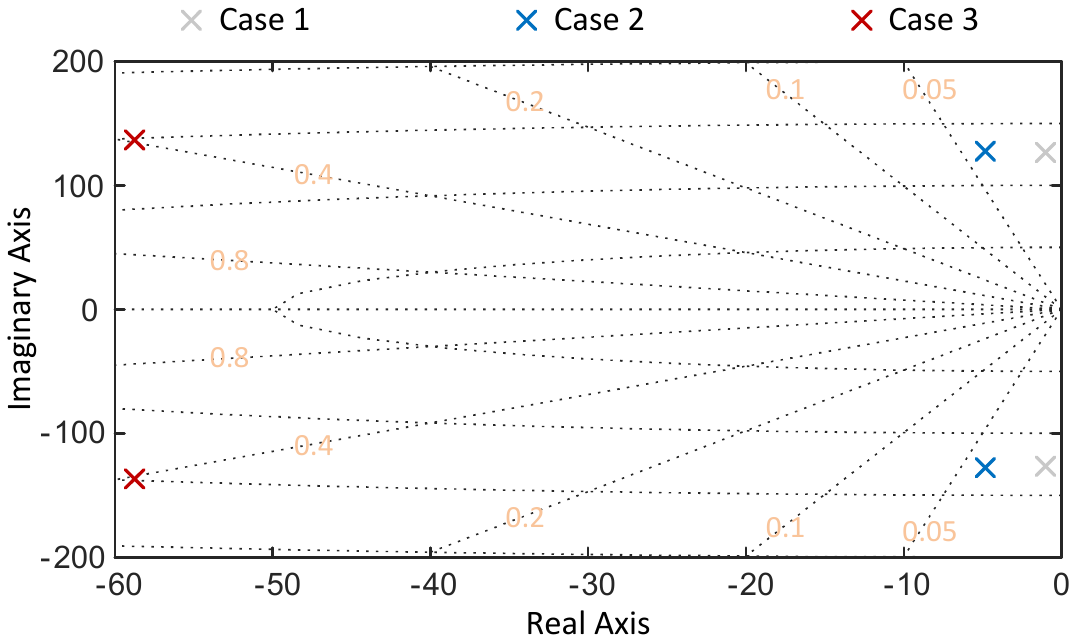}
	\vspace{-2mm}
	\caption{Dominant eigenvalues of the of the four-converter test system when placing grid-forming controls at different spots.}
	\vspace{0mm}
	\label{Fig_Eigen_4area}
\end{figure}

\subsection{Case studies on a nine-converter test system}

In the following, we consider a nine-converter system in order to further test the effectiveness of the proposed algorithm. As shown in Fig.~\ref{Fig_39bus}, The nine converters are interconnected via a 39-bus network, where Bus~39 is connected to an infinite bus. Note that the topology of the this network is actually the same as the standard England IEEE 39-bus system, and the network parameters are the same as those in \cite{huang2019impacts}. The converter parameters are basically the same as those listed in Appendix~\ref{Appendix: System parameters} (with the PLL parameters changed to $\{147, 10773\}$ in order to provide a critically stable base case).

\begin{figure}[!t]
	\centering
	\includegraphics[width=3in]{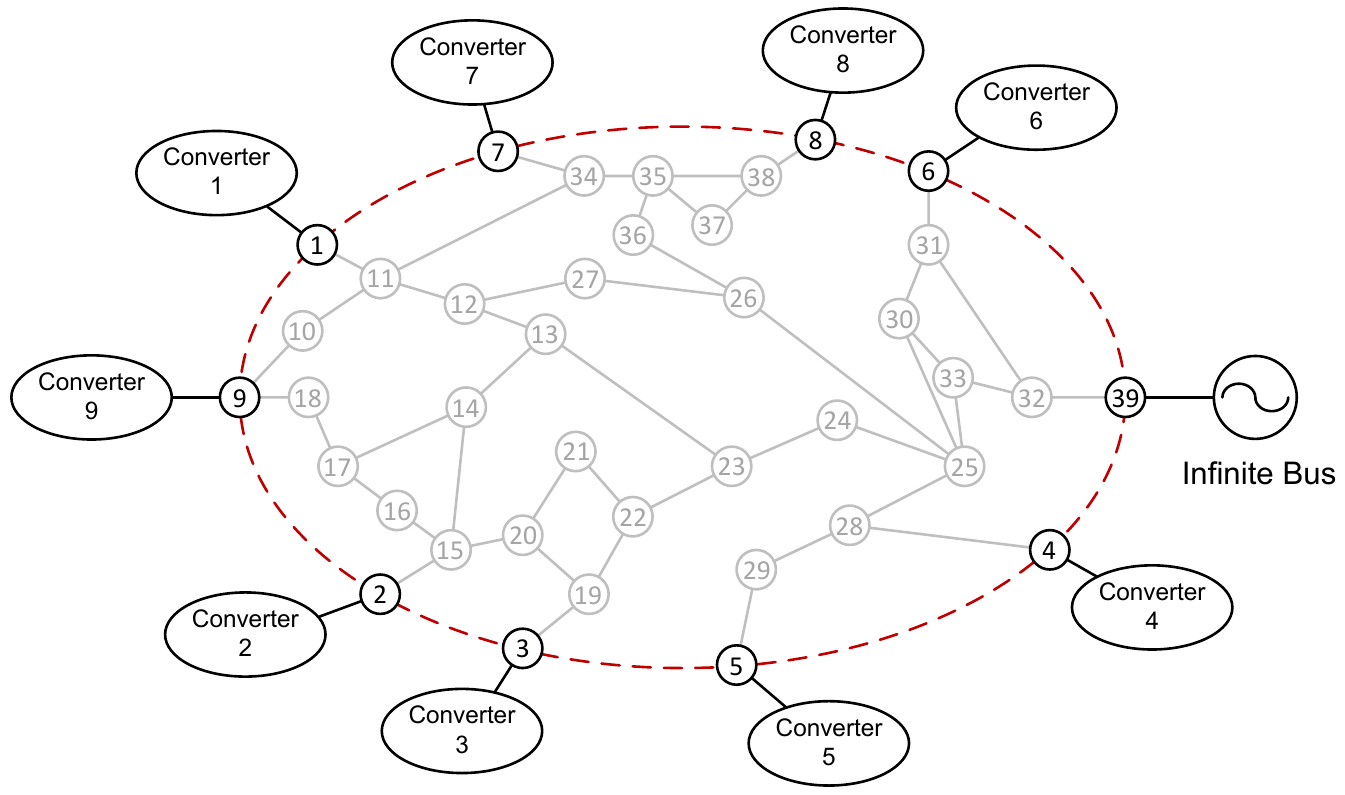}
	\vspace{-2mm}
	\caption{A nine-converter test system.}
	\vspace{0mm}
	\label{Fig_39bus}
\end{figure}

The smallest eigenvalue of the Kron-reduced Laplacian matrix of the network (with the network parameters given in \cite{huang2019impacts} and the capacities of the converters are identical) is $\lambda_1 = 3.31$. Then, we use the algorithms presented in Section~IV to choose two converters to be operated as grid-forming converters. To find the globally optimal solution, we solve the optimization problem in \eqref{eq:optimal_placement} by enumerating all the possible results and pick the best solution, and the obtained optimal set is $\mathcal{I} = \{1,4\}$. The smallest eigenvalue becomes $\lambda_{\min}[\mathcal{R}_{\mathcal{V} \backslash \mathcal{I}}(\mathcal{L})] = 14.43$ with the optimal set $\mathcal{I} = \{1,4\}$, which indicates that grid strength is dramatically increased by applying grid-forming controls in Converter~1 and Converter~4.

For comparison, we also use Algorithm~1 (wherein the solution to \eqref{eq:optimal_placement1} is obtained by checking the participation factors) to calculate the suboptimal solution to \eqref{eq:optimal_placement}, and the obtained locations are $\alpha_1 = 9$ and $\alpha_2 = 8$. The smallest eigenvalue becomes $\lambda_{\min}[\mathcal{R}_{\mathcal{V} \backslash \{\alpha_1,\alpha_2\}}(\mathcal{L})] = 12.03$ with this suboptimal set $\{\alpha_1,\alpha_2\}=\{8,9\}$. It can be seen that although this solution is a suboptimal one, it also effectively increases the grid strength, which validates the effectiveness of the proposed algorithm.

Fig.~\ref{Fig_39bus_sims} shows the active power responses of the nine converters under different control settings (an overload event occurs at 0.2s and is cleared after 0.02s). In Fig.~\ref{Fig_39bus_sims}~(a), all the converters apply PLL-based control and the system has very low damping ratio (low PLL-induced small-signal stability margin), which is caused by the interaction between PLL-base converters and weak grid conditions \cite{dong2018small, huang2019impacts}. By comparison, it can be seen from Fig.~\ref{Fig_39bus_sims}~(b) that when applying grid-forming controls in Converters 1 and 4 (which is the optimal solution as discussed above), the multi-converter system has high damping ratio and superior performance. In Fig.~\ref{Fig_39bus_sims}~(c), grid-forming controls are applied in Converters 8 and 9 (which is the suboptimal solution to \eqref{eq:optimal_placement} obtained by Algorithm~1 as discussed above), and it can be seen that the system also has superior damping performance, which again verifies the effectiveness of Algorithm~1. The above simulation results are again consistent with the analysis in this paper which shows the placements of grid-forming converters can effectively improve the (PLL-induced) small-signal stability of multi-converter systems.

\begin{figure}[!t]
	\centering
	\includegraphics[width=3.4in]{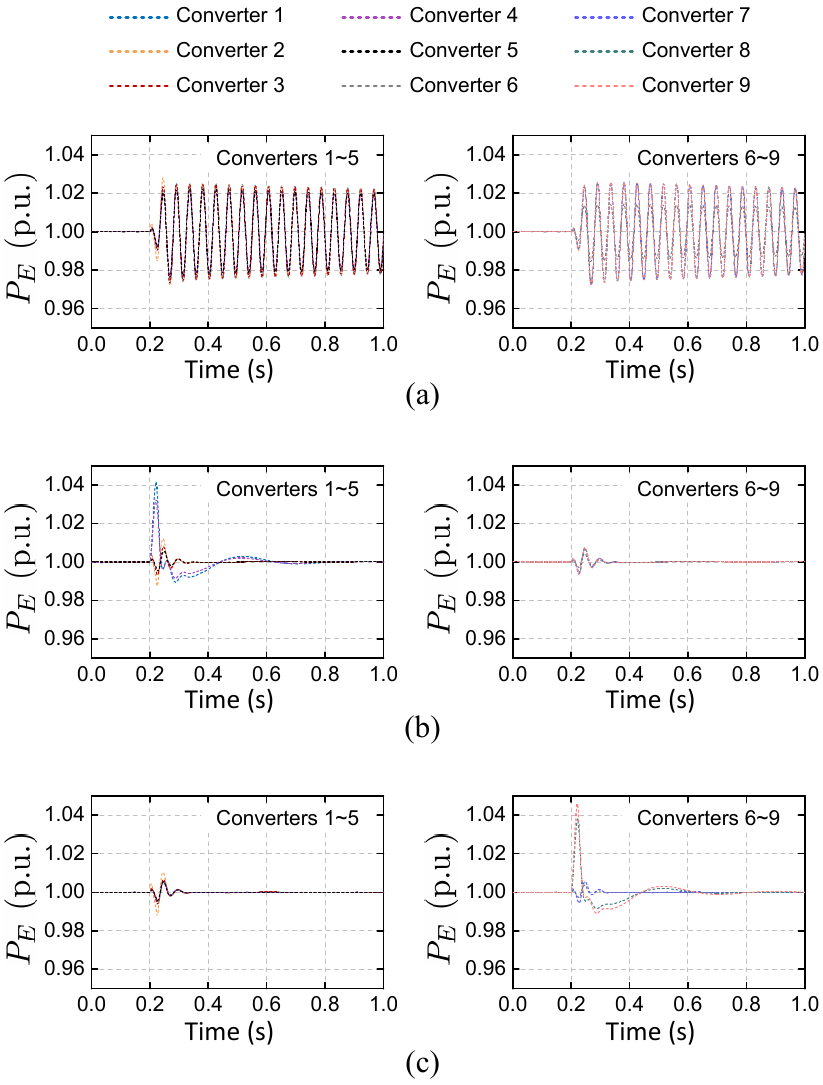}
	\vspace{-2mm}
	\caption{Time-domain responses of the nine converters when placing grid-forming controls at different spots (an overload event occurs at 0.2s and is cleared after 0.02s). (a) All the converters use PLL-based control. (b) Grid-forming control is applied in Converters 1 and 4. (c) Grid-forming control is applied in Converters 8 and 9.}
	\vspace{0mm}
	\label{Fig_39bus_sims}
\end{figure}

\section{Conclusions and Discussions}

This paper investigated the impacts of grid-forming converters on the small-signal stability of power systems that are integrated with PLL-based converters. By deriving the small-signal model of multi-converter systems, we explicitly revealed that the placement of grid-forming converters is equivalent to enhancing the power grid strength characterized by the generalized short-circuit ratio (i.e., the smallest eigenvalue of the weighted and Kron-reduced Laplacian matrix), thereby improving the (PLL-induced) small-signal stability. Our analyses in this paper focus on the characterization of power grid strength and thus provide a convenient way to study how the locations of grid-forming converters influence the system stability. On this basis, we investigated the optimal placement of grid-forming converters for improving the system stability. After explicitly formulating the optimization problem, we elaborated on how to solve it in a computationally efficient fashion such that it can be used in large-scale systems. Simulations based on a high-fidelity two-area test system and a high-fidelity 39-bus test system verify that placing grid-forming converters in proper locations can significantly improve the small-signal stability of PLL-integrated power systems.

Our work in this paper is based on characterizing the (PLL-induced) small-signal stability of multi-converter systems from the perspective of power grid strength, which provides a convenient way to rigorously show that the placements of grid-forming converters enhance the overall system stability (as discussed in Proposition~\ref{prop:Grid_Forming}). Moreover, by focusing on the power grid strength, the optimal locations to place grid-forming converters can be conveniently determined by solving an optimization problem which is related only to the network parameters and the capacities of the converters. In this way, we avoid directly optimizing the damping ratio of the dominant poles of the multi-converter system, which is intractable as it would require to derive the state matrix of the whole system.

Based on our theoretical analysis in this paper, possible future works can include the optimal placements of grid-forming converters considering both small-signal stability and frequency stability, and also stability-constrained network planning of multi-converter systems.


\bibliographystyle{IEEEtran}
\bibliography{ref}



\renewcommand\arraystretch{1.25}
\begin{table}
	\scriptsize
	\centering
	\vspace{0mm}
	\caption{Parameters of the two-area test system}
	\begin{tabular}{|llll|}
		\hline
        \multicolumn{4}{|c|}{Base Values for Per-unit Calculation}				\\
        \hline
        $f_{\rm base} = 50{\rm Hz}$		&	$\omega_{\rm base} = 2\pi f_{\rm base}$ &	$U_{\rm base} = 690{\rm V}$	&	$S_{\rm base} = 1.5{\rm MVA}$		\\
        \hline
		\multicolumn{4}{|c|}{Power Network Parameters (per-unit values)}				\\	
		\hline
        $B_{15} = {0.10}$ & $B_{25} = {0.05}$ & $B_{39} = {0.2}$ & $B_{49} = {0.15}$       \\
        $B_{56} = {0.015}$ & $B_{67} = {0.015}$ & $B_{78} = {0.035}$ & $B_{7,10} = {0.018}$       \\
		$B_{89} = {0.02}$ & $\tau = {0.1}$ & $C_{1} = {0.05}$ & $C_{2} = {0.05}$       \\
		\multicolumn{2}{|l}{Load 1: {0.5}}	& \multicolumn{2}{l|}{Load 2: {0.5}}	\\
        \hline
		\multicolumn{4}{|c|}{{\em LCL} Parameters of the Converter (per-unit values)}			\\
		\hline
		$L_F = 0.05$		&	$C_F = 0.05$		&	$L_g = 0.06$			&	{}		\\
		\hline
		\multicolumn{4}{|c|}{Parameters of the PLL-Based Control (per-unit values)}		\\
		\hline
		\multicolumn{4}{|l|}{PI parameters of the current control Loop: 0.3, 10}		\\
        \multicolumn{4}{|l|}{PI parameters of the active power control Loop: 0.5, 40}		\\
        \multicolumn{4}{|l|}{PI parameters of the reactive power control Loop: 0.5, 40}		\\
        \multicolumn{4}{|l|}{PI parameters of the PLL: 104, 5390}		\\
        $P^{\rm ref} = 1.0$		&	$Q^{\rm ref} = 0$		&	$T_{\rm VF} = 0.02$			&	$K_{\rm VF} = 1.0$		\\
		\hline
		\multicolumn{4}{|c|}{Parameters of the Grid-Forming Control (per-unit values)}									\\
		\hline
		\multicolumn{4}{|l|}{PI parameters of the current control Loop: 0.3, 10}		\\
        \multicolumn{4}{|l|}{PI parameters of the voltage control Loop: 4, 30}		\\
        $T_{\rm VF} = 0.02$			&	$K_{\rm VF} = 1.0$		&	$J = 2$			&	$D = 50$		\\
        $k_F = 0.1$			&	$P_0 = 1.0$		&	{}			&	{}		\\
		\hline
	\end{tabular}	
    \vspace{15mm}
	\label{table:sys_parameters}
\end{table}

\appendices
\vspace{0mm}
\section{System Parameters}
\label{Appendix: System parameters}

See Table~\ref{table:sys_parameters}.

\end{document}